\documentclass[%
reprint,
]{revtex4-2}

\usepackage{amsmath}
\usepackage{amssymb}
\usepackage{bm,bbm}
\usepackage{graphicx}
\usepackage{enumitem}
\usepackage[bookmarks=false]{hyperref}
\hypersetup{colorlinks=true,citecolor=blue,linkcolor=red,urlcolor=blue,pdfstartview=FitH,bookmarksopen=true}

\usepackage[T1]{fontenc}
\usepackage[osf,sc]{mathpazo}
\usepackage{amsthm}
\usepackage{physics}
\usepackage{dsfont}
\usepackage{algorithm}
\usepackage{algpseudocode}
\newtheorem{theorem}{Theorem}

\newtheorem{lemma}[theorem]{Lemma}

\DeclareMathOperator{\diag}{diag}

\DeclareMathOperator{\Sp}{Span}

\DeclareMathOperator{\supp}{Supp}

\newcommand{\bs}{\bm{s}}
\newcommand{\ee}{\mathrm{e}}
\newcommand{\ii}{\mathrm{i}}

\newcommand{\maxover}[1][]{\underset{#1}{\mathrm{max}}}

\newcommand{\subto}{\mathrm{~s.t.}}
\newcommand{\FA}{\mathrm{~~for~all~}}

\newcommand{\cC}{\mathcal{C}}
\newcommand{\cD}{\mathcal{D}}

\newcommand{\cH}{\mathcal{H}}

\newcommand{\cL}{\mathcal{L}}

\newcommand{\cN}{\mathcal{N}}

\newcommand{\cP}{\mathcal{P}}

\newcommand{\cS}{\mathcal{S}}

\newcommand{\cU}{\mathcal{U}}
\newcommand{\cV}{\mathcal{V}}

\newcommand{\I}{\mathds{1}}

\usepackage[normalem]{ulem}

\begin{document}
	
\title{Spectral bounds for the partial transpose}
\author{Wei-Jie Jiang}
\affiliation{Department of Physics, Shandong University, Jinan 250100, China}
\author{Jing-Tao Qiu}
\affiliation{Department of Physics, Shandong University, Jinan 250100, China}
\author{Xiao-Dong Yu}
\email{yuxiaodong@sdu.edu.cn}
\affiliation{Department of Physics, Shandong University, Jinan 250100, China}

\begin{abstract}
Among the various entanglement measures, the negativity stands out not
only for its clear physical meaning but also for being directly
computable from the spectrum of the partial transpose.
However, the negativity captures only the total weight of the negative
eigenvalues, whereas the finer structure of the negative spectrum remains
largely unexplored.
In this work, we fill this gap by introducing a hierarchical
generalization of the negativity, the Ky Fan $k$-negativity, and
developing a unified analytical framework for deriving spectral bounds
on the partial transpose.
To obtain the absolute bounds,
we reduce the maximization problem to a spectral graph optimization,
whose solution yields the exact bound for every $k$ through a single
cubic equation.
We further investigate these spectral bounds both analytically and
numerically under a fixed-purity constraint.
In particular, we solve the $k=1$ case completely and uncover a simple
underlying graph structure.
As two direct applications, we show that the Ky Fan $k$-negativity
robustly certifies genuine multilevel entanglement and that it converts
the $p_3$-PPT condition into a quantitative lower bound on the
negativity.
\end{abstract}

\maketitle

\section{Introduction}

Quantum entanglement is a defining nonclassical
resource in quantum information science and serves as a key resource
for quantum teleportation
\cite{BennettTeleportingunknownquantum1993,
BouwmeesterExperimentalQuantumTeleportation1997,
HuExperimentalHighDimensional2020},
device-independent quantum key distribution \cite{AcinDeviceIndependentSecurity2007,VaziraniFullyDeviceIndependent2014,ZapateroAdvancesdeviceindependent2023},
measurement-based quantum computation
\cite{RaussendorfOneWayQuantum2001,
RaussendorfMeasurementbasedquantum2003,
BriegelMeasurementBasedQuantum2009},
and quantum metrology \cite{GiovannettiQuantumenhancedmeasurements2004,GiovannettiQuantumMetrology2006,PezzeEntanglementNonlinearDynamics2009}.
In entanglement theory, partial transposition plays a fundamental role.
In particular, the positive partial transpose (PPT) criterion provides
one of the most widely used methods for detecting entanglement
\cite{PeresSeparabilityCriterionDensity1996,
HorodeckiSeparabilitymixedstates1996}.
Beyond its role as a mathematical criterion, partial transposition is
closely connected to operational tasks such as entanglement
distillation
\cite{BennettPurificationNoisyEntanglement1996,HorodeckiMixedStateEntanglement1998,PlenioLogarithmicNegativityFull2005,LamiNosecondlaw2023}
and cost
\cite{AudenaertEntanglementCostPositive2003,WangCostQuantumEntanglement2020,GourDynamicalEntanglement2020,LamiComputableEntanglementCost2025,WangComputableFaithfulLower2025}.
Indeed, whether nondistillable states with a nonpositive partial
transpose exist remains one of the most prominent open problems in
entanglement theory
\cite{HorodeckiMixedStateEntanglement1998,
HorodeckiQuantumentanglement2009,
HorodeckiFiveOpenProblems2022}.
The applications of partial transposition also extend beyond
entanglement theory, for example, to condensed matter physics
\cite{ShapourianManyBodyTopological2017}, quantum field theory
\cite{CalabreseEntanglementNegativityQuantum2012,
RuggieroNegativityspectrumone2016}, and computer science
\cite{DiVincenzoQuantumdatahiding2002}.

While the nonpositivity of the partial transpose already certifies
entanglement, its spectral properties encode substantially richer
physical information. The most prominent spectral
quantity is the entanglement negativity
\cite{VidalComputablemeasureentanglement2002,
PlenioLogarithmicNegativityFull2005},
which measures the total magnitude of the negative eigenvalues and is
widely used to quantify entanglement. At the level of individual
eigenvalues, the smallest eigenvalue provides a complementary
perspective. Unlike the negativity, which can increase without bound
as the local dimension grows, the smallest eigenvalue is universally
bounded from below by $-1/2$
\cite{SanperaLocaldescriptionquantum1998,
RanaNegativeeigenvaluespartial2013}. This lower bound is saturated by
maximally entangled states of Schmidt rank two, including those
embedded in higher-dimensional Hilbert spaces. Consequently, large
negativity in high-dimensional systems cannot arise from a single
increasingly negative eigenvalue, but instead typically reflects the
collective contribution of multiple negative eigenvalues.

The spectral structure of the partial transpose is important not only
from a theoretical perspective, for example, for characterizing
entanglement witnesses
\cite{JohnstonNonpositivepartial2013,
Johnstoninverseeigenvalueproblem2018,
ShenInertiasentanglementwitnesses2020,
LiangInertiapartialtranspose2024},
but also for practical applications, particularly given the recent
interest in entanglement certification from moments of the partial
transpose \cite{ElbenMixedStateEntanglement2020,YuOptimalEntanglementCertification2021,NevenSymmetryresolvedentanglement2021}. However, general results on the spectrum of the
partial transpose remain scarce beyond the two extreme cases discussed
above. This scarcity can be attributed primarily to two factors: the
intrinsic complexity of the problem and the lack of theoretical tools
tailored to the intermediate spectral regime.

In this work, we address this gap by introducing the Ky Fan
$k$-negativity, denoted by $\cN_k$. This construction is conceptually
inspired by majorization theory
\cite{MarshallInequalitiesTheoryMajorization2011}, which characterizes
the concentration of probability distributions. The name reflects its
structural similarity to the Ky Fan $k$-norm, defined as the sum of the
$k$ largest singular values of an operator. By summing the absolute
values of the $k$ most negative eigenvalues of the partial transpose,
the Ky Fan $k$-negativity interpolates between the magnitude of its
minimum eigenvalue and the negativity.

The central question we address is: What are the absolute bounds on the
Ky Fan $k$-negativity? We answer this question from a graph-theoretic
perspective. We prove that the maximization of $\cN_k$ reduces to a
problem in spectral graph theory, whose solution gives the exact bound
$\cN_k^{\max}$ for every $k$. We further show that this maximum can be
attained only by pure states.

Beyond the absolute bounds attained by pure states, we generalize the
problem to mixed states and address the same question under a fixed
purity constraint, $p=\Tr(\rho^2)$. This constraint is experimentally
motivated, as the purity is one of the few global properties of a
quantum state that can be accessed without full state tomography \cite{EkertDirectEstimationsLinear2002,ElbenRenyiEntropiesRandom2018,ElbenMixedStateEntanglement2020,ZhouSingleCopiesEstimation2020}.
In this setting, we completely solve the fundamental case $k=1$.
For the general case $k>1$, we develop an efficient numerical algorithm
and derive analytical upper bounds.

Our results also have direct applications. As two examples, we show that
the Ky Fan $k$-negativity robustly certifies genuine multilevel entanglement
\cite{KraftCharacterizingGenuineMultilevel2018,
CongWitnessingIrreducibleDimension2017}
and upgrades the $p_3$-PPT condition
\cite{ElbenMixedStateEntanglement2020}
from a qualitative entanglement criterion to a quantitative lower
bound on the negativity.

Our paper is organized as follows. In Sec.~\ref{sec:prelim}, we define
the Ky Fan $k$-negativity and establish its fundamental properties. In
Sec.~\ref{sec:pure_k}, we derive its absolute bound and prove that this
bound can be attained only by pure states. In
Sec.~\ref{sec:mixed_k1}, we solve the purity-constrained problem for
$k=1$, while in Sec.~\ref{sec:general_k}, we address the general
mixed-state case for $k>1$. In Sec.~\ref{sec:app}, we present
applications of the Ky Fan $k$-negativity. Finally, we summarize our
results and discuss their implications in
Sec.~\ref{sec:conclusion}.

\section{Ky Fan negativity}\label{sec:prelim}
	
We consider a bipartite Hilbert space $\cH_A\otimes\cH_B$. For notational
simplicity, we assume that $\dim\cH_A=\dim\cH_B=d$, although all results
generalize directly to the case of unequal local dimensions. A quantum state is
represented by a positive semidefinite operator $\rho\succeq0$ satisfying
$\Tr(\rho)=1$.
For any state $\rho=\sum_{i,j,k,l=1}^d\rho_{ij;k\ell}\ketbra{ij}{kl}$,
the partial transpose on $\mathcal{H}_B$ is defined as
$\rho^\Gamma=\sum\rho_{ij;k\ell}\ketbra{i}{k}\otimes\ketbra{\ell}{j}
=\sum\rho_{i\ell;kj}\ketbra{ij}{kl}$.
Similarly, one can define the partial transposition on $\cH_A$, which differs 
from $\rho^\Gamma$ only by a global transposition and thus shares the exact 
same spectrum.

For any $k\in\mathds{N}$, we define the Ky Fan $k$-negativity as the
sum of the absolute values of the $k$ smallest negative eigenvalues of
$\rho^\Gamma$. Formally,
\begin{equation}\label{eq:def_kneg}
    \mathcal{N}_k(\rho) := 
	\sum_{\substack{1\le i\le k\\
	\lambda_i^{\uparrow}(\rho^\Gamma)<0}} |\lambda_i^{\uparrow}(\rho^\Gamma)|
\end{equation}
where $\lambda_i^{\uparrow}(\rho^\Gamma)$ are the negative eigenvalues of
$\rho^\Gamma$ in non-decreasing order. When $\rho$ is PPT, i.e.,
$\rho^\Gamma\succeq0$, all $\cN_k(\rho)$ are defined to be zero.

For $k=1$, $\mathcal{N}_1(\rho)$
is precisely the magnitude of the most negative eigenvalue
$|\lambda_{\min}(\rho^\Gamma)|$. As $k$ increases, $\cN_k$ accumulates
contributions deeper in the negative spectrum; when $k$ equals the total number
of negative eigenvalues, $\mathcal{N}_k(\rho)$ exactly recovers the standard
negativity $\mathcal{N}(\rho)$.

Although $\cN_k$ is not an entanglement measure, it shares many important
properties with entanglement measures. For example, $\cN_k(\rho)$ is invariant under local
unitary transformations. Another crucial property of $\cN_k(\rho)$ is its
invariance under the addition of local pure ancillae, i.e.,
\begin{equation}
    \cN_k(\ketbra{\psi_{A'}}\otimes\rho_{AB}\otimes\ketbra{\varphi_{B'}})
    =\cN_k(\rho_{AB}),
    \label{eq:ancilla}
\end{equation}
where $\ket{\psi_{A'}}$ and $\ket{\varphi_{B'}}$ are pure states on auxiliary
systems $\cH_{A'}$ and $\cH_{B'}$, respectively, 
and the partial transposition is taken with respect to the bipartition $A'A|BB'$.
Remarkably, the dimensions of $\cH_{A'}$ and $\cH_{B'}$ are not limited in
Eq.~\eqref{eq:ancilla}.
Physically, this implies that the dimensions of $\cH_A$ and $\cH_B$ can be 
considered arbitrarily large, with $\rho_{AB}$ embedded within a subspace of 
$\cH_A\otimes\cH_B$. Hereafter, we will always assume that the dimensions
of $\cH_A$ and $\cH_B$ are arbitrarily large unless stated otherwise.
Under this convention, the expression for $\cN_k(\rho)$ can
be simplified as
\begin{equation}\label{eq:kneg_nodim}
	\cN_k(\rho) = -\sum_{i=1}^k \lambda_i^{\uparrow}(\rho^\Gamma),
\end{equation}
since the number of zero eigenvalues separating the negative and positive
spectrum can be arbitrarily extended.
Equation~\eqref{eq:kneg_nodim} directly implies that $\cN_k(\rho)$ is a
convex function \cite{BoydConvexOptimization2004}.
In addition, we note that the equality \eqref{eq:ancilla} no longer holds 
when the ancillae are mixed states. This fact explains why $\cN_k$ can increase
under LOCC, and thus does not constitute a conventional entanglement measure
\cite{VidalEntanglementmonotones2000}.
We will later demonstrate that this non-monotonic property plays a
central role in applications.

\section{Absolute bound for Ky Fan negativity}\label{sec:pure_k}

In this section, we study the absolute bounds of $\cN_k(\rho)$. The lower bound
$\cN_k(\rho)\ge 0$ is obvious, and it is obtained if and only if $\rho$
is PPT. Thus, we focus on the upper bound, which extends the well-known
bound $\cN_1(\rho)\le\frac{1}{2}$ to all $\cN_k(\rho)$, providing a general 
constraint on the spectrum of the partial transpose.

Due to the convexity of $\cN_k(\rho)$, the maximization can always be
obtained on the pure states.
Let us express the pure state in its Schmidt decomposition as
$\ket{\psi}=\sum_{i=1}^d s_i\ket{ii}$, where $s_1\ge s_2\ge\cdots\ge s_d\ge 0$
and $\sum_{i=1}^d s_i^2=1$.
The negative eigenvalues of $\rho^\Gamma$ are then given by $-s_is_j$ for all
pairs $i<j$. 
Therefore, $\cN_k(\ket{\psi})$ can be expressed as
\begin{equation}
    \cN_k(\ket{\psi}) = \sum_{\text{$k$ largest}} s_i s_j,
\end{equation}
where the summation runs over the $k$ largest elements of the set
$\{s_i s_j\}_{i<j}$.

We start by examining a few concrete examples with small values of $k$.
The $k=1$ case is trivial, i.e., $s_1s_2\le (s_1^2+s_2^2)/2\le 1/2$.
For $k=2$, we need to select the two largest elements of $\{s_i s_j\}_{i<j}$.
As $s_1\ge s_2\ge\cdots\ge s_d\ge 0$, the largest two ones are obviously
$s_1s_2$ and $s_1s_3$.
Thus, we need to maximize $s_1 s_2 + s_1 s_3$ subject to the normalization constraint 
$s_1^2 + s_2^2 + s_3^2 = 1$ and the positivity condition $s_i \ge 0$
\footnote{In principle, it should be $s_1^2+s_2^2+s_3^2\le 1$,
but the maximum is clearly achieved when the equality holds.}.
Now observe that we can rewrite the objective function in a quadratic form:
\begin{equation}
	s_1s_2+s_1s_3
	=\frac12\begin{bmatrix}s_1, & s_2, & s_3\end{bmatrix}
	\underbrace{\begin{bmatrix}
			0 & 1 & 1\\
			1 & 0 & 0\\
			1 & 0 & 0
	\end{bmatrix}}_{A}
	\begin{bmatrix}s_1\\ s_2\\ s_3\end{bmatrix}
	\equiv \frac12 \bm{s}^T A \bm{s}.
    \label{eq:N2}
\end{equation}
One can easily see that $s_1s_2+s_1s_3$ is bounded by the
largest eigenvalue of $A$, more precisely, $\lambda_{\max}(A)/2$.
Moreover, this bound is achievable if the eigenvector corresponding to the 
largest eigenvalue can be chosen to be non-negative. This fact can be verified
by direct calculation, where the largest eigenvalue of $A$ is $\sqrt{2}$
and the corresponding eigenvector is $\bm{s}=(\sqrt{2}/2,1/2,1/2)$.
Thus, we obtain a universal bound for $\cN_2(\rho)$:
\begin{equation}
    \cN_2(\rho)\le\frac{\sqrt{2}}{2},
\end{equation}
and the bound is achieved if and only if $\rho$ is equal to
$(\sqrt{2}\ket{11}+\ket{22}+\ket{33})/2$ up to a local
unitary transformation. The necessity of this state configuration follows from
the result that the maximum of $\cN_k(\rho)$ is achieved only on pure states, a
fact we will prove later.
The fact that the corresponding eigenvector is non-negative is not a coincidence. 
This follows from the Perron-Frobenius theorem, which states that for any matrix 
with non-negative entries, the eigenvector corresponding to the largest 
eigenvalue can always be chosen to be non-negative \cite{PerronZurTheorieder1907,FrobeniusUbermatrizenaus1908}.

When $k \ge 3$, things become more complicated because the ordering of the 
products $s_is_j$ is no longer fixed. For example, whether $s_1s_4$ or 
$s_2s_3$ is the third largest value depends entirely on the specific choice 
of the vector $\bm{s}$. To obtain an analytical solution to the problem, 
we link it to a classical problem in spectral graph theory.

\begin{figure}
    \centering
    \includegraphics[width=\linewidth]{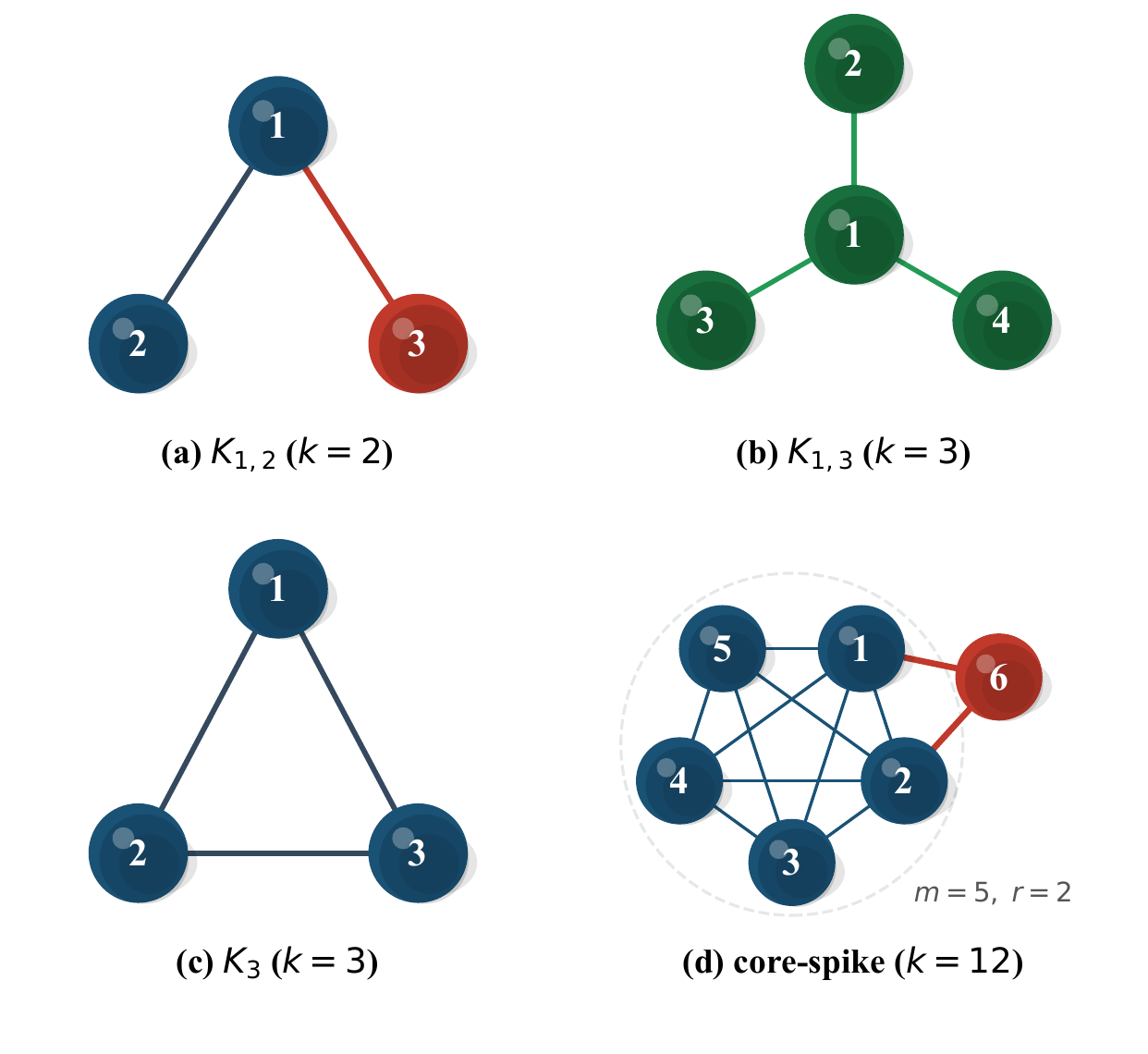}
    \caption{
        Four graphs with $k$ edges that are candidates for maximizing the 
        spectral radius. 
        (a)~The star graph $K_{1,2}$. 
        (b)~The star graph $K_{1,3}$. 
        (c)~The complete graph $K_3$. 
        (d)~The extremal graph $G^*_{12}$.
    }
    \label{fig:graph}
\end{figure}
The matrix $A$ in Eq.~\eqref{eq:N2} has a simple interpretation in graph theory:
it is the adjacency matrix of a
graph with vertices $\{1,2,3\}$ and edges $\{(1,2),(1,3)\}$, i.e., Fig.~\ref{fig:graph}(a).
In this graph, each active Schmidt coefficient corresponds to a
vertex, and each selected product $s_is_j$ corresponds to an edge between
vertices $i$ and $j$.  The quadratic form $\frac12\bm{s}^T A\bm{s}$
precisely sums the products $s_i s_j$ over all edges. 

For $k=3$ we must pick three products. Two possible candidates are the sets 
$\{s_1s_2, s_1s_3, s_1s_4\}$ and $\{s_1s_2, s_1s_3, s_2s_3\}$. They correspond
to graphs with three edges: a star graph $K_{1,3}$ with edges
$\{(1,2),(1,3),(1,4)\}$ (Fig.~\ref{fig:graph}(b)) and a triangle graph $K_3$ with edges
$\{(1,2),(1,3),(2,3)\}$ (Fig.~\ref{fig:graph}(c)). Their adjacency matrices are
\begin{equation}
    A_{\mathrm{star}}= 
\begin{bmatrix}
	0&1&1&1\\ 1&0&0&0\\ 1&0&0&0\\ 1&0&0&0
\end{bmatrix},\quad
A_{\mathrm{tri}}= 
\begin{bmatrix}
	0&1&1\\ 1&0&1\\ 1&1&0
\end{bmatrix}.
\end{equation}
Computing their largest eigenvalues gives
$\lambda_{\max}(A_{\mathrm{star}})=\sqrt{3}$,
$\lambda_{\max}(A_{\mathrm{tri}})=2$. The triangle graph therefore yields a
larger Ky Fan $3$-negativity. Thus, $\cN_3^{\max}=1$ and it is achieved when
$s_1=s_2=s_3=1/\sqrt{3}$.
In graph theory, the largest eigenvalue of the adjacency matrix is 
called the spectral radius of the graph.
This simple example illustrates that for a fixed number of edges, the
graph with denser connectivity has a larger spectral radius.

The same reasoning extends to arbitrary $k$. To determine the maximal $\cN_k$, we select a set $E$ of $k$ pairs $(i,j)$ with $i<j$ that will enter the sum. Identifying each $i$ with a vertex, and each selected pair with an edge, we obtain a graph $G=(V,E)$ with $|V|=n$ (the number of non-zero Schmidt coefficients) and $|E|=k$. The objective function takes the quadratic form
\begin{equation}
    \sum_{(i,j)\in E} s_i s_j = \frac12 \bm{s}^T A_G \bm{s},
\end{equation}
where $\bm{s}=(s_1,\dots,s_n)^T$ satisfies $\lVert\bm{s}\rVert=1$,
$s_i\ge 0$, and $A_G$ is the adjacency matrix of $G$. 
The maximum of this quadratic form for a fixed $G$ is $\frac12\lambda_{\max}(A_G)$. Therefore, the maximum over all states is
\begin{equation}
    \cN_k^{\max} = \frac12 \max_{\substack{G=(V,E)\\ |E|=k}} \lambda_{\max}(A_G).
    \label{eq:Nkmax}
\end{equation}
Then the problem of finding the exact upper bound for the Ky Fan $k$-negativity reduces to a question in spectral graph theory: among all graphs with
$k$ edges, which one attains the largest spectral radius?

The problem of maximizing the spectral radius of a graph given a fixed number
of edges is a classic challenge in spectral graph theory.
The problem was first posed by Brualdi and Hoffman in 1976 \cite[page
438]{BermondProblemesCombinatoireset1978}. They proved that for
$k=\binom{m}{2}$ the maximum spectral radius is attained by the
complete graph and conjectured that for general $k=\binom{m}{2}+r$ the maximum
spectral radius is attained by adding a new vertex and $r$ new edges to the
complete graph
\cite{Brualdispectralradius011985}; see an illustration in Fig.~\ref{fig:graph}(d).
Following the analytical bounds established by Stanley
\cite{Stanleyboundspectralradius1987} and Friedland
\cite{FriedlandBoundsspectralradius1988}, Rowlinson ultimately
confirmed the Brualdi-Hoffman conjecture \cite{Rowlinsonmaximalindexgraphs1988}.
Concretely, any integer $k$ can be uniquely decomposed as
\begin{equation}
    k = \binom{m}{2} + r, \qquad 0 \le r < m,
\end{equation}
where $m$ is the number of vertices that can be fully connected.
The extremal graph $G_k^*$ is then a core-spike structure: a complete
graph $K_m$ on $m$ vertices (the core), plus one additional vertex connected to
exactly $r$ of the core vertices (the spike). When $r=0$, the spike is absent
and $G_k^*$ is simply $K_m$. 

With the structure of the extremal graph identified, we can compute
$\lambda_{\max}(A_{G_k^*})$ exactly.  We can partition the vertex set of the
graph into three sets:
the set $\cS_1$ of $r$ core vertices connected to the spike, the set
$\cS_2$ of $m-r$ core vertices not connected to the spike, and the set $\cS_3$
of the spike vertex. This partition is equitable: every vertex in a given set
has the same number of neighbors in each of the three sets
\cite{GodsilAlgebraicgraphtheory2001}. We can therefore
record the number of neighbors between sets in a $3\times3$ quotient matrix
\begin{equation}
Q = \begin{bmatrix}
    r-1 & m-r & 1 \\
    r & m-r-1 & 0 \\
    r & 0 & 0
\end{bmatrix},
\end{equation}
where $Q_{ij}$ is the number of edges from a vertex in set $\cS_i$ to vertices
in set $\cS_j$, and the largest eigenvalue $\lambda_{\max}$ of $A_{G_k^*}$ coincides
with the largest eigenvalue of $Q$.
By noting that the characteristic polynomial of $Q$ is
\begin{equation}\label{eq:cubic_poly}
\lambda^3 - (m-2)\lambda^2 - (m+r-1)\lambda + r(m-r-1),
\end{equation}
we get the exact bound of Ky Fan $k$-negativity.

\begin{theorem}\label{thm:pure_exact}
    Let $k = \binom{m}{2} + r$ with $0 \le r < m$. Then the maximum Ky
    Fan $k$-negativity $\mathcal{N}_k^{\max}$ is given by the largest real root of
    \begin{equation}
       f(\lambda):=8\lambda^3 - 4(m-2)\lambda^2 - 2(m+r-1)\lambda + r(m-r-1).
    \end{equation}
\end{theorem}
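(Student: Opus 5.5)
The plan is to assemble the chain of reductions laid out above and then do a change of variables in the cubic.

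\textbf{Step 1: reduction to pure states.} By Eq.~\eqref{eq:kneg_nodim}, $\cN_k$ is convex, so its maximum over states is attained at a pure state $\ket{\psi}=\sum_i s_i\ket{ii}$. There $\cN_k(\ket{\psi})$ is the sum of the $k$ largest products $s_is_j$, $i<j$.

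\textbf{Step 2: equivalence with the graph problem, Eq.~\eqref{eq:Nkmax}.} I will prove both inequalities.
\begin{itemize}
\item For ``$\le$'': fix $\bm{s}$. The $k$ largest products index an edge set $E$ with $|E|=k$. Their sum equals $\frac12\bm{s}^TA_G\bm{s}\le\frac12\lambda_{\max}(A_G)$ by the Rayleigh quotient, since $\lVert\bm{s}\rVert=1$.
\item For ``$\ge$'': fix any $G$ with $k$ edges. Perron--Frobenius gives a unit eigenvector $\bm{s}\ge0$ for $\lambda_{\max}(A_G)$. Take the state with these Schmidt coefficients, reordered and padded with zeros; this is allowed since the local dimensions are arbitrary. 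The sum of the $k$ largest products is at least the sum over the particular $k$ pairs in $E$, which equals $\frac12\lambda_{\max}(A_G)$.
\end{itemize}
Hence $\cN_k^{\max}=\frac12\max_{|E|=k}\lambda_{\max}(A_G)$.

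\textbf{Step 3: the extremal graph (main external input).} I invoke Rowlinson's theorem, which confirms the Brualdi--Hoffman conjecture. For $k=\binom{m}{2}+r$ with $0\le r<m$, the maximizer is $G_k^*$: the clique $K_m$ plus one vertex joined to $r$ clique vertices. This is the deep step and I will not reprove it. Isolated vertices do not affect the spectral radius, so only the edge count matters.

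\textbf{Step 4: computing $\lambda_{\max}(A_{G_k^*})$.} I use the equitable partition $\cS_1,\cS_2,\cS_3$ with quotient matrix $Q$.
\begin{itemize}
\item \emph{Why $Q$ carries the spectral radius.} Every eigenvalue of $Q$ is an eigenvalue of $A$: lift an eigenvector of $Q$ by making it constant on each cell. In the other direction, the Perron vector of $A$ can be taken constant on cells. To see this, average it over the symmetry group permuting vertices within $\cS_1$ and within $\cS_2$. The average is still a nonnegative eigenvector, and it is nonzero because it is nonnegative with positive sum. Hence $\lambda_{\max}(A)=\lambda_{\max}(Q)$, and this is the largest real root of $\det(\lambda I-Q)$.
\item \emph{The characteristic polynomial.} A direct cofactor expansion gives $\det(\lambda I-Q)=\lambda^3-(m-2)\lambda^2-(m+r-1)\lambda+r(m-r-1)$, as in Eq.~\eqref{eq:cubic_poly}. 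For example, the $\lambda^2$ coefficient is $-\Tr Q=-(m-2)$.
\item \emph{Edge cases.} If $r=0$, then $\cS_1$ is empty. The polynomial factors as $\lambda(\lambda^2-(m-2)\lambda-(m-1))$ with largest root $m-1$, which is correct for $K_m$; the empty cell only introduces the spurious root $0$. If $m-r=0$, the same check applies. So the formula holds uniformly.
\end{itemize}

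\textbf{Step 5: rescaling.} Substitute $\lambda=2\mu$ into the cubic:
\[
8\mu^3-4(m-2)\mu^2-2(m+r-1)\mu+r(m-r-1)=f(\mu).
\]
The map $\lambda\mapsto\lambda/2$ is increasing, so the largest real root of $f$ is $\frac12\lambda_{\max}(A_{G_k^*})=\cN_k^{\max}$.

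The only genuinely hard ingredient is Step 3, which is imported. The remaining point needing care is justifying that the largest root of $Q$, and not some other root, equals the spectral radius; the Perron-vector averaging argument in Step 4 handles this.
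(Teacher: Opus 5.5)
Your proposal is correct and follows the paper's own route: convexity reduces the problem to pure states, the sum of the $k$ largest Schmidt products becomes $\frac12\lambda_{\max}(A_G)$ over graphs with $k$ edges, Rowlinson's resolution of the Brualdi--Hoffman conjecture identifies $G_k^*$, and the equitable-partition quotient matrix $Q$ gives the cubic, which becomes $f$ after the substitution $\lambda=2\mu$. Your two-sided proof of Eq.~\eqref{eq:Nkmax} and the automorphism-averaging argument that $\lambda_{\max}(A_{G_k^*})=\lambda_{\max}(Q)$ supply details the paper only asserts; the $m-r=0$ edge case you mention is vacuous, since $r<m$.
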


One can easily verify that when $k=1,2,3$, the bounds $\cN^{\max}_1 = 1/2$, 
$\cN^{\max}_2 = 1/\sqrt{2}$, and $\cN^{\max}_3 = 1$ are recovered.
For the special case $r=0$, i.e., when $k=\binom{m}{2}$, one obtains
a simple result $\mathcal{N}_{\binom{m}{2}}^{\max} = \frac{m-1}{2}$.
The exact values of $\cN_k^{\max}$ obtained from Theorem~\ref{thm:pure_exact}
are plotted in Fig.~\ref{fig:Nk_max} for $k$ up to $36$. The continuous curve
shows Stanley's upper envelope $(\sqrt{1+8k}-1)/4$, which is attained precisely
when $k$ is a triangular number, i.e., $k=\binom{m}{2}$
\cite{Stanleyboundspectralradius1987}.
\begin{figure}
    \centering
    \includegraphics[width=0.9\columnwidth]{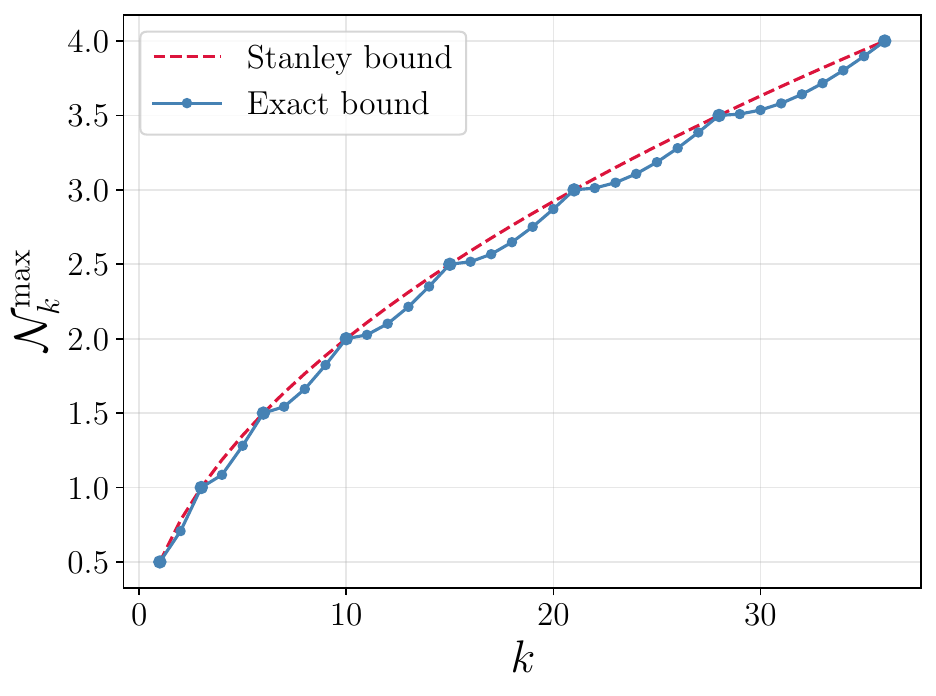}
    \caption{Exact maximum $\cN_k^{\max}$ for pure states as a function 
        of $k$.  The dashed red curve is Stanley's continuous envelope 
        $(\sqrt{1+8k}-1)/4$.}
    \label{fig:Nk_max}
\end{figure}

The optimal Schmidt coefficients that achieve $\mathcal{N}_k^{\max}$ are given
by the eigenvector of the quotient matrix $Q$. Let $\bm{s} = (s_1, s_2,
s_3)^T$ be the eigenvector corresponding to $\lambda_{\max}$, normalized such
that the full state vector has unit norm. The Schmidt decomposition of the
extremal state takes the form
\begin{equation}
    \ket{\psi_k^*} =  s_1 \sum_{i\in\cS_1} \ket{ii} + s_2 \sum_{j\in\cS_2}
    \ket{jj} + s_3 \ket{m+1,m+1},
\end{equation}
with the normalization $r s_1^2 + (m-r) s_2^2 + s_3^2 = 1$. For all $k>0$,
the inequality $s_1 \ge s_2 \ge s_3$ holds, reflecting the intuition that the
core vertices adjacent to the spike are the entanglement hub of the
configuration. In the limit $r=0$ (the spike disappears), $s_2$ and $s_3$
approach $s_1$ and the state becomes a uniform superposition over $m$ states,
recovering the well‑known maximally entangled state of dimension $m$.

Before proceeding to the mixed-state case, we show that the upper bound 
$\cN_k^{\max}$ can only be achieved by pure states. 
By the convexity of $\cN_k(\rho)$, all pure states $\ket{\psi_i}$ 
in the decomposition $\rho=\sum_{i}p_i\ketbra{\psi_i}{\psi_i}$ 
must achieve the optimal value $\cN_k^{\max}$. The unitary freedom of 
decomposition further implies that any state $\ket{\psi}$ within the 
support of $\rho$ must also be an optimal pure state, possessing the 
identical Schmidt coefficients dictated by the optimal graph $G_k^*$.
For any two pure states $\ket{\psi_1}$ and $\ket{\psi_2}$ in the support of
$\rho$, any normalized superposition 
\begin{equation}
    \ket{\phi(\theta, \varphi)} = \cos\theta\ket{\psi_1} + \sin\theta \ee^{\ii\varphi}\ket{\psi_2}
\end{equation}
must also share the same Schmidt coefficients. 
This forces all moments of the reduced density matrix of $\ket{\phi(\theta,
\varphi)}$ to be phase-independent constants. A Fourier projection over the
superposition phase explicitly reveals that $X = \Tr_B(\ketbra{\psi_1}{\psi_2})$
must be strictly nilpotent, 
but expanding the states $\ket{\psi_1}$ and $\ket{\psi_2}$ within identical
basis demonstrates that $X$ is inherently an invertible matrix. 
The conflict between nilpotency and invertibility proves that the global maximum
is exclusively achieved by pure states. Please see
Appendix~\ref{app:pure_state_proof} for more details.
	
\section{Mixed states: the case $k=1$ as an illustrative example}
\label{sec:mixed_k1}

Having fully characterized the bounds for pure states, we now direct our
attention to mixed states $\rho$ subject to a purity constraint $p =
\Tr(\rho^2)$.  As the most fundamental scenario, we completely resolve the $k=1$
case. We find the exact analytical upper bound of
\begin{equation}
\cN_1^{\max}(p)=\max_{\Tr(\rho^2)=p}\cN_1(\rho)
=-\min_{\Tr(\rho^2)=p}\lambda_{\min}(\rho^\Gamma).
\end{equation}
This section demonstrates how the graph-theoretic
method extends to mixed states and yields explicit purity-dependent bounds.
	
We begin our analysis with the variational definition of the bound
\begin{equation}
\begin{aligned}
    \cN_1^{\max} (p)&= \max_{\Tr(\rho^2)=p}\max_{\ket{\psi}}
    \Tr(-\rho^\Gamma \ketbra{\psi}) \\
    &= \max_{\Tr(\rho^2)=p}\max_{\ket{\psi}}
    \Tr\bigl(-\rho(\ketbra{\psi})^\Gamma\bigr).
\end{aligned}	
\label{eq:N1max}
\end{equation}
Let us express the optimal test state in its Schmidt decomposition as
$\ket{\psi}=\sum_{i=1}^d s_i\ket{ii}$, where $s_1\ge s_2\ge\cdots\ge s_d\ge 0$
and $\sum_{i=1}^d s_i^2=1$.
$\ketbra{\psi}^\Gamma$ possesses negative eigenvalues  $-s_i s_j$ with
corresponding eigenvector $\ket*{\psi_{ij}^-} = (\ket{ij} - \ket{ji})/\sqrt{2}$
for $i < j$. 
Note that we do not restrict the dimension of the systems, thus
the optimal $\rho$ must be entirely supported on the antisymmetric subspace
spanned by $\{\ket*{\psi_{ij}^-}\}$, i.e.,
\begin{equation}\label{eq:rho_antisym_mixed}
    \rho = \sum_{i<j} m_{ij} \ketbra*{\psi_{ij}^-}
\end{equation}
with $m_{ij} \ge 0$, $\sum_{i<j} m_{ij} = 1$, and $\sum_{i<j} m_{ij}^2 = p$
\footnote{In principle, it should be $\rho = \sum_{i<j,k<l} m_{ij,kl}
\ketbra*{\psi_{ij}^-}{\psi_{kl}^-}$, but the optimality can always
be achieved when $\rho$ is diagonal under the basis
$\{\ket*{\psi_{ij}^-}\}_{i<j}$.}.
Substituting Eq.~\eqref{eq:rho_antisym_mixed} into Eq.~\eqref{eq:N1max}, we
obtain that

\begin{equation}
    \cN_1^{\max}(p)
    =\max_{m_{ij}}\max_{s_i}\sum_{i<j}s_is_jm_{ij}.
\end{equation}
Defining a symmetric non-negative matrix $M$ with $M_{ij}=m_{ij}$ for $i\neq j$
and $M_{ii}=0$, the constraints become $\sum_{i\neq j} M_{ij} = 2$ and
$\sum_{i\neq j} M_{ij}^2 = 2p$. Consequently, the bound $\cN_1^{\max}(p)$
reduces to a nonlinear optimization problem:
\begin{equation}
\begin{aligned}
&\maxover[M,\bs] \quad && \frac12\sum_{i\ne j} M_{ij}s_is_j\\
&\subto && \sum_is_i^2=1,~
\sum_{i\ne j}M_{ij}=2,\\
&&&\sum_{i\ne j}M_{ij}^2=2p,~M_{ij}\ge 0.
\end{aligned}
\label{eq:N1opt}
\end{equation}
Here, we omit the positivity constraints $s_i\ge 0$, because
similar to Eq.~\eqref{eq:Nkmax}, $\cN_1^{\max}(p)$ can also be reformulated
as an eigenvalue problem
\begin{equation}
    \cN_1^{\max}(p) = \max_{M,\bs} \frac12
    \bs^T M \bs = \frac12 \max_{M}\lambda_{\max}(M),
\end{equation}
and the positivity constraints $s_i\ge 0$ can be guaranteed by the
Perron-Frobenius theorem.

Similar to the pure state scenario, the matrix $M$ also has a graph theory
interpretation. We can map every such $\rho$ to a weighted graph $G = (V, E,
M)$. In this graph, there is an edge between vertex $i$ and vertex $j$ if
$M_{ij}>0$. $M_{ij}$ serves as the weight of this edge. Then the matrix
$M$ is naturally the adjacency matrix of the graph.
Thus, the problem of determining $\cN_1^{\max}(p)$ is exactly recast as a
spectral graph theory problem: maximizing the spectral radius of a weighted
graph subject to fixed total weight and total squared weight.

To solve this problem analytically, we utilize the Karush-Kuhn-Tucker conditions
\cite{KuhnNonlinearprogramming2013,KarushMinimaFunctionsSeveral2014} to tackle
the nonlinear optimization in Eq.~\eqref{eq:N1opt}.
We construct the Lagrangian function with multipliers $\lambda$ (enforcing $\sum
s_i^2 = 1$), $\alpha$ (for the sum of edge weights $\sum_{i\neq j} M_{ij} = 2$),
$\beta$ (for the purity constraint $\sum_{i\neq j} M_{ij}^2 = 2p$), and
$\gamma_{ij} \ge 0$ (enforcing non-negativity $M_{ij} \ge 0$):
\begin{equation}
    \begin{aligned}
        \mathcal{L} =&\sum_{i\neq j} M_{ij} s_i s_j - \lambda\Bigl(\sum_i s_i^2 - 1\Bigr) - \alpha\Bigl(\sum_{i\neq j} M_{ij} - 2\Bigr)\\
        &- \beta\Bigl(\sum_{i\neq j} M_{ij}^2 - 2p\Bigr) + \sum_{i\neq j}\gamma_{ij}M_{ij}.
    \end{aligned}
\end{equation}
Since $M$ is symmetric, we regard $M_{ij}$ and $M_{ji}$ as the same variable
while retaining the notation $\sum_{i\neq j}$ for convenience and correspondingly take $\gamma_{ij}=\gamma_{ji}$.
Imposing the stationarity condition with respect to the edge weights $M_{ij}$ yields
\begin{equation}\label{eq:stationary_M}
    \frac{\partial \cL}{\partial M_{ij}}=2(s_i s_j - \alpha - 2\beta M_{ij} + \gamma_{ij}) = 0.
\end{equation}
If $\beta\neq 0$, combining Eq.~\eqref{eq:stationary_M} with the complementary slackness condition $\gamma_{ij} M_{ij} = 0$, we
obtain the element-wise rule
\begin{equation}\label{eq:elementwise}
    M_{ij} = \max\bigl(a s_i s_j + b, 0\bigr),
\end{equation}
where we define $a = 1/(2\beta) > 0$ and $b = -\alpha/(2\beta)$.
The sign of $\beta$ is easy to understand. If $\beta$ were
negative, the optimal solution of Eq.~\eqref{eq:elementwise}
would assign smaller weights to edges with larger
product $s_is_j$, which contradicts the objective of maximization.
Eq.~\eqref{eq:elementwise} defines a strict threshold rule: an edge exists
between vertex $i$ and vertex $j$ ($M_{ij} > 0$) if and only if the product of
their eigenvector components strictly exceeds the threshold $\alpha$:
\begin{equation}
    s_i s_j > -\frac{b}{a} = \alpha.
\end{equation}
If $\beta=0$, the stationarity condition reduces to $s_i s_j=\alpha$ on every edge $(i,j)$ and $s_is_j\le\alpha$ otherwise.

Solving for the maximal spectral radius of $M$ is therefore equivalent to uncovering the optimal edge topology that satisfies the threshold rule. The following structural lemma severely restricts the permissible graph.

\begin{lemma}\label{lem:structure_mixed}
    Under the optimality conditions, the non-zero components of the eigenvector $\bs$ can take at most two distinct values. If two distinct values occur, the smaller value appears at most once.
\end{lemma}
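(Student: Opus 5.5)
The plan is to combine the two first-order conditions and count the solutions of the scalar equation they impose on each component. Stationarity of $\cL$ with respect to $\bs$ gives $M\bs=\lambda\bs$ with $\lambda=\lambda_{\max}(M)>0$. By the Perron--Frobenius theorem we may take $\bs\ge 0$, and we restrict attention to the support $V_+=\{i:s_i>0\}$. On $V_+$ the graph is connected, since a disconnected optimum could be improved by moving all weight to the component carrying $\lambda_{\max}$.

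\emph{Case $\beta\neq0$.} Insert the threshold rule \eqref{eq:elementwise} into the eigen-equation. Every $x=s_i$ with $i\in V_+$ must then satisfy
\begin{equation}
\lambda x=\Phi(x)-x\max(ax^2+b,0),\qquad \Phi(x):=\sum_{j\in V_+}\max(a x s_j+b,0)\,s_j .
\end{equation}
Here $\Phi$ depends only on the multiset $\{s_j\}$, and the second term removes the excluded diagonal $j=i$. The function $\Phi$ is convex and piecewise linear in $x$, with nondecreasing slopes $a\sum_{j:\,axs_j+b>0}s_j^2\le a$. The left side $\lambda x$ is linear. The function $x\max(ax^2+b,0)$ vanishes up to a threshold and is strictly convex beyond it.

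The first key step is to show that $\psi(x):=\Phi(x)-x\max(ax^2+b,0)-\lambda x$ has at most two positive zeros. I would split into two regimes. Below the diagonal threshold, $\psi$ is convex, so it has at most two zeros there. Above the threshold, I would show that $\psi$ is concave, or at least changes sign only once, using $a\sum s_j^2=a$ to compare the slope of $\Phi$ with $3ax^2+b$. This would give at most two distinct nonzero values $y>x$.

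\emph{Counting the smaller value.} The second step is to show that the smaller value $x$ occurs at most once. Suppose two vertices $u,v$ both carry $x$. I would compare the $x$-equation with the $y$-equation. The edge $(u,v)$ has weight $\max(ax^2+b,0)$, and the $x$-vertices receive from the $y$-vertices strictly less weight than the $y$-vertices receive from each other, because $axy+b<ay^2+b$. Writing $n_x,n_y$ for the multiplicities, the two eigen-equations reduce to a $2\times2$ quotient system in $(x,y)$. I would then use an exchange argument: redistributing weight from the edge $(u,v)$ onto an edge incident to a $y$-vertex keeps $\sum M_{ij}$ fixed. It also keeps $\sum M_{ij}^2$ fixed, after a compensating rebalancing among the $y$–$y$ edges, which all have equal weight. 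This strictly increases $\bs^TM\bs$ because $xy>x^2$, contradicting optimality. If instead $ax^2+b\le0$, so that $(u,v)$ is not an edge, the same perturbation creates an edge between a $y$-vertex and a new low vertex. In that case I would verify directly, via the quotient system, that the configuration with $n_x\ge2$ has a smaller $\lambda_{\max}$ than the configuration that merges the $x$-vertices into the $y$-class.

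\emph{Case $\beta=0$.} Here every edge satisfies $s_is_j=\alpha$ and every non-edge satisfies $s_is_j\le\alpha$. Connectivity then forces each vertex's neighbours to share the value $\alpha/s_i$, so at most two values occur, and these are reciprocal up to the factor $\alpha$. The non-edge inequality then controls multiplicities. If $x<y$, then $y^2>\alpha$, so all $y$-vertices must be pairwise adjacent, which would force $y^2=\alpha$. Hence at most one vertex carries the value exceeding $\sqrt{\alpha}$. I would then check that this boundary configuration is dominated, or reconcile it with the lemma's orientation by rechecking the purity constraint.

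I expect the main obstacle to be the concavity/sign-change claim above the diagonal threshold in the root count, together with the exchange argument that keeps both weight constraints fixed simultaneously.
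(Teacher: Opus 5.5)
Your proposal has two genuine gaps, one in each half of the argument.

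\textbf{The root count for ``at most two values''.} Put $x_0=\sqrt{-b/a}$ for $b<0$. Below $x_0$ your convexity argument is fine; with $\psi(0)=0$ and $\psi<0$ near $0$ it even gives at most one zero. Above $x_0$, however, $\psi$ is in general neither concave nor single-crossing.
\begin{itemize}
\item $\Phi$ has upward kinks at $t_j=x_0^2/s_j$, and $t_j>x_0$ exactly when $s_j<x_0$. So whenever some vertex lies below the threshold (the paper's Sub-case 2.1), concavity on $[x_0,\infty)$ fails and the count leaves room for more zeros.
\item When no vertex lies below $x_0$, $\psi$ coincides on $[x_0,\infty)$ with the concave cubic $P(t)=-at^3+(a-b-\lambda)t+bS_1$. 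That cubic can vanish twice there. The lemma's own extremizer, for $p$ slightly above $2/(v(v-1))$, has both values above $x_0$, so ``changes sign only once'' is false. In this regime concavity does close the count (the paper's Sub-case 2.2).
\end{itemize}
The hard configuration is a lowest class $\cV_m$ with $|\cV_m|\ge2$ and $v_m<x_0$. The upper-clique vertices all satisfy $P=0$, which allows two roots, while $\cV_m$ obeys a different equation, and your count does not exclude a third value. The paper works in the opposite order. It first compares the eigen-equations at $v_m$ and $v_{m-1}$ to show that $\cV_m$ is adjacent to every higher vertex and that the higher vertices form a clique. It then proves $|\cV_m|=1$ by perturbation. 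Only then is the graph complete, so every value is a root of one cubic, which has at most two positive roots for $b<0$. The multiplicity statement is therefore an input to the two-value statement, not a consequence of it.

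\textbf{The exchange argument for the multiplicity.} This fails for a structural reason. At a KKT point, $s_is_j=\alpha+2\beta M_{ij}$ on every positive-weight edge. Hence any weight perturbation preserving $\sum M_{ij}$ and $\sum M_{ij}^2$ to first order leaves $\sum M_{ij}s_is_j$ unchanged to first order: the gain from $xy>x^2$ is exactly paid back when purity is restored.
\begin{itemize}
\item Your specific compensation does not exist. Moving weight from $(u,v)$, weight $ax^2+b$, to $(u,w)$, weight $axy+b$, raises $\sum M_{ij}^2$. Redistributing among the equal $y$--$y$ weights at fixed total can only raise it further.
\item No first-order argument can work. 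The weighted star $K_{1,r}$, $r\ge2$, with all weights $1/r$ and $p=1/r$, satisfies every KKT condition. This holds with $\beta\neq0$ for any $a\ge 2/(\sqrt r-1)$, and also with $\beta=0$. Yet its smaller value occurs $r$ times.
\end{itemize}
You need second-order moves, or moves exploiting the failure of constraint qualification, as in the paper:
\begin{itemize}
\item Perturb $\bs$ inside $\cV_m$ with $\sum_{\sigma\in\cV_m}s_\sigma^2$ fixed, so $S_4$ is unchanged and $S_2$ decreases. Then use $(\partial\lambda/\partial S_2)_{S_4}=b<0$ from $\lambda=\sqrt{w(S_4-S_2^2/N)}+2S_2/N$.
\item In the complete case, use a three-vertex move fixing sum and sum of squares, under which $(x,x,y)$ maximizes $\sum_i s_i^4$.
\item For stars, add a leaf--leaf edge of weight $t$ while shifting two star weights by order $\sqrt t$. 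Stars arise both from $|\cU|=1$ and from the $\beta=0$ branch you left open.
\end{itemize}
Your branch with $ax^2+b\le0$ (``merge the $x$-vertices into the $y$-class'') is not yet an argument.
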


The main idea of the proof is as follows: our analysis shows that the non-zero components $s_i$ satisfy the same cubic polynomial equation, which restricts them to at most two distinct values. We then use perturbation arguments to show that, whenever two distinct values occur, the smaller value can appear only once. The details are shown
in Appendix~\ref{app:proof_structure}.

\begin{figure}
    \centering
    \includegraphics[width=\linewidth]{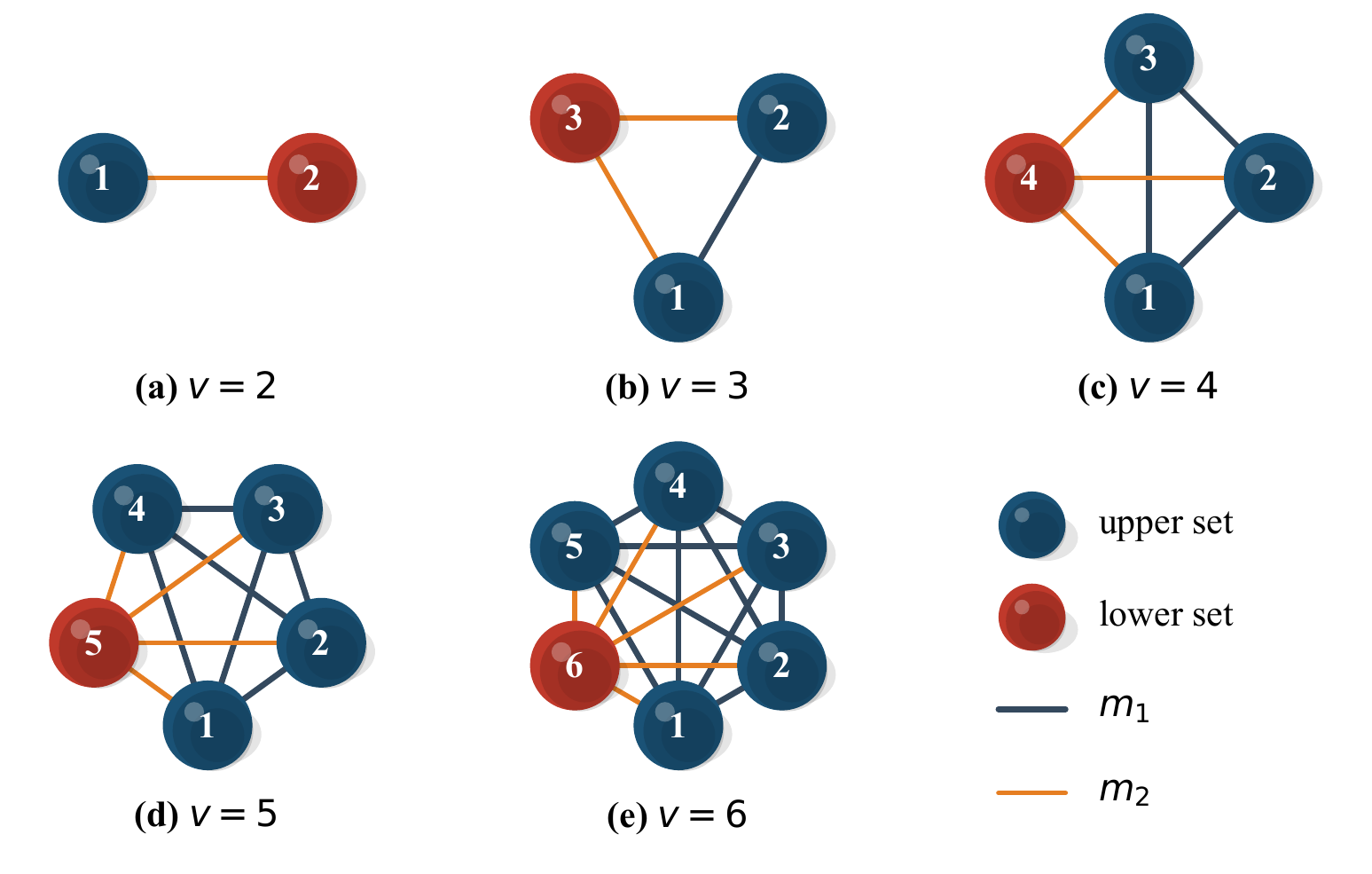}
    \caption{
        Weighted complete graphs that realize the optimal structure of
        Lemma~\ref{lem:structure_mixed} for vertex numbers $v$.
        The upper set ($v-1$ vertices, blue) is a complete clique with
        uniform edge weight $m_1$; the single lower vertex (red) is
        connected to every upper vertex with weight $m_2$ (orange).}
    \label{fig:weighted_graphs}
\end{figure}

Lemma~\ref{lem:structure_mixed} implies that the optimal graph is a complete
graph with $v$ vertices which is partitioned into an upper set of $v-1$ vertices
sharing a larger value, and a lower set of a single vertex carrying a smaller
value. Then the matrix $M$ has two distinct non-zero entries: $m_1$ on edges
within the upper set, and $m_2$ on edges between the upper set and the single
lower vertex. The optimal weighted-graphs are illustrated in Fig.~\ref{fig:weighted_graphs}. 
Let $n_1 = (v-1)(v-2)/2$ denote the number of edges in the upper set,
$n_2 = v-1$ denote the number of edges between the upper set and the lower set,
and $n = n_1 + n_2 = v(v-1)/2$ denote the total number of edges. Consequently,
the normalization and purity constraints reduce to 
$n_1 m_1 + n_2 m_2 = 1$ and
$n_1 m_1^2 + n_2 m_2^2 = p$, respectively. 
Solving the equations under $m_1 \ge m_2 > 0$ yields the edge weights:
\begin{align}
    \label{eq:m1}
    m_1 &= \frac{2}{v(v-1)}\qty[1 + \sqrt{\frac{pv^2-pv-2}{v-2}}], \\
    \label{eq:m2}
    m_2 &= \frac{2}{v(v-1)}\qty[1 - \sqrt{\frac{(v-2)(pv^2-pv-2)}{4}}].
\end{align}
The positivity of $m_2$ forces $p$ to lie in the interval
\begin{equation}
    \frac{2}{v(v-1)} \le p < \frac{2}{(v-1)(v-2)},
\end{equation}
which uniquely determines the integer $v$ for a given $p$,
\begin{equation}
    \label{eq:v}
    v =\left\lceil\frac{\sqrt{1+8p^{-1}}+1}{2} \right\rceil.
\end{equation}
The spectral radius of $M$ coincides exactly with the largest eigenvalue of the $2 \times 2$ reduced quotient matrix:
\begin{equation}
    Q = \begin{bmatrix}
        (v-2)m_1 & m_2 \\
        (v-1)m_2 & 0
    \end{bmatrix},
    \label{eq:Qmix}
\end{equation}
Thus, the analytical expression for $\cN_1^{\max}(p)$ can be derived from
$\frac12 \lambda_{\max}(M)$.

\begin{theorem}\label{thm:N1max}
For any quantum state with purity $\Tr(\rho^2)=p$, $\cN_1^{\max}(p)$ is
given by
\begin{equation}
    \cN_1^{\max}(p) = \frac{(v-2)m_1 + \sqrt{(v-2)^2 m_1^2 + 4(v-1)m_2^2}}{4},
\end{equation}
and the bound is obtained by
\begin{equation}
    \rho = m_1 \sum_{1 \le i < j \le v-1} \ketbra*{\psi_{ij}^-} + m_2
    \sum_{i=1}^{v-1} \ketbra*{\psi_{iv}^-},
\end{equation}
where $m_1$, $m_2$, $v$ are given by
Eqs.~(\ref{eq:m1},\,\ref{eq:m2},\,\ref{eq:v}), respectively.
\end{theorem}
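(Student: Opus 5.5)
The plan is to start from the reformulation already established: $\cN_1^{\max}(p)=\frac12\max_M\lambda_{\max}(M)$, where $M$ ranges over symmetric non-negative zero-diagonal matrices with $\sum_{i\ne j}M_{ij}=2$ and $\sum_{i\ne j}M_{ij}^2=2p$. I first argue that a maximizer exists. If the number of vertices is unbounded, I truncate to the support of the Perron vector, or note that the objective and constraints are continuous on a compact set for any fixed dimension large enough. The KKT conditions then apply at the optimum, so I may invoke Lemma~\ref{lem:structure_mixed}. It says the nonzero entries of the Perron vector $\bs$ take at most two values: a larger value $s_{\rm u}$ on $v-1$ vertices and, possibly, a smaller value $s_{\rm l}$ on a single vertex. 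Vertices with $s_i=0$ contribute nothing and can be discarded.

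Next I use the threshold rule \eqref{eq:elementwise}, $M_{ij}=\max(as_is_j+b,0)$, to fix the weights. Since $M_{ij}$ depends only on the product $s_is_j$, there are only three possible products: $s_{\rm u}^2$, $s_{\rm u}s_{\rm l}$ and $s_{\rm l}^2$. The last does not occur because the lower vertex is unique. So $M$ has weight $m_1$ on upper–upper edges and $m_2$ on upper–lower edges, with $m_1\ge m_2\ge0$. I must also exclude a partially connected clique. Edges between vertices with equal products receive equal weight, so if any upper–upper edge is absent then all are, and then the support of $M$ would not contain the Perron vector's support in a connected way. I also handle the degenerate case $\beta=0$: there $s_is_j=\alpha$ on every edge, which forces a uniform clique, i.e.\ $p=2/(v(v-1))$ with $m_2=m_1$. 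This is included in the formulas at the left endpoint.

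With this structure, the constraints read $n_1m_1+n_2m_2=1$ and $n_1m_1^2+n_2m_2^2=p$. Solving this quadratic with $m_1\ge m_2$ gives \eqref{eq:m1}--\eqref{eq:m2}. The partition (upper set, lower vertex) is equitable, so $\lambda_{\max}(M)$ equals the larger eigenvalue of $Q$ in \eqref{eq:Qmix}, namely $\bigl((v-2)m_1+\sqrt{(v-2)^2m_1^2+4(v-1)m_2^2}\bigr)/2$. Halving this gives the claimed formula. The corresponding $\rho$ is read off directly from \eqref{eq:rho_antisym_mixed}, and substituting the Perron vector of $Q$ confirms that it attains the bound.

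The main obstacle is fixing $v$. A priori, several integers $v$ admit real nonnegative solutions, since a real $m_2\ge0$ requires only $p\ge 2/(v(v-1))$ together with the upper bound $p\le 2/((v-1)(v-2))$. I have to show that the admissible and optimal $v$ is the unique one in \eqref{eq:v}. My first attempt is to show that $m_2>0$ strictly is necessary: if $m_2=0$, the lower vertex is isolated, and we are reduced to a uniform clique on $v-1$ vertices, which has a fixed purity. For larger $v$, where $p$ exceeds the upper bound, $m_2$ becomes negative, which is infeasible. For smaller $v$, the square root in \eqref{eq:m1} becomes imaginary, which is also infeasible. The admissible intervals $[2/(v(v-1)),2/((v-1)(v-2)))$ tile $(0,1]$, so $v$ is unique. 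If some borderline case allows two candidates, I would compare the two values of $\lambda_{\max}(Q)$ directly, as a fallback.
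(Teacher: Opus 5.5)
Your proposal follows the paper's own route: Lemma~\ref{lem:structure_mixed} plus the threshold rule give a complete graph on $v$ vertices with weight $m_1$ on upper--upper edges and $m_2$ on upper--lower edges. You then solve the two constraints, use $m_2>0$ and the fact that the purity intervals tile $(0,1]$ to pin down $v$, and read off $\lambda_{\max}$ from the equitable quotient $Q$. One claim needs fixing (the paper glosses over the same point). For $\beta=0$ with a single Perron value, stationarity forces only constant weighted degree, not a uniform clique; for example, a uniformly weighted $4$-cycle at $p=1/4$ is such a KKT point. The conclusion survives, because any such point on $v'$ vertices has $\lambda_{\max}=2/v'$, Cauchy--Schwarz on the purity gives $v'\ge v$, and testing your $v$-vertex candidate with the uniform vector gives $\lambda_{\max}\ge 2/v$, so these points are dominated.
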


As illustrated in Fig.~\ref{fig:bound_curve}, the exact bound manifests as a
piecewise function of the purity $p$. As the state becomes increasingly
mixed, the effective dimension $v$ required to sustain the maximal negativity
undergoes discrete upward jumps.
Analogous to Stanley's bound in the discrete case, we can construct a smooth
envelope for this piecewise bound $\cN_1^{\max}(p)$. The key point is to
observe that the non-smoothness originates from the discontinuity of $v$. By
dropping the ceiling function in Eq.~\eqref{eq:v}, we obtain a smooth upper
bound from Eq.~\eqref{eq:Qmix} (see Appendix~\ref{app:upper_bound} for the rigorous proof)
\begin{equation}
	\cN_1^{\max}(p) \le \frac{2}{1 + \sqrt{1 + 8p^{-1}}}.
\end{equation}
This relaxed bound, shown as a black dashed curve in Fig.~\ref{fig:bound_curve},
coincides with the exact piecewise bound precisely at the set of endpoints $p =
\frac{2}{v(v-1)}$.
In the low-purity limit, the bound is nearly tight and scales as
$\cN_1^{\max}(p) \approx \sqrt{p/2}$.

\begin{figure}
	\centering
	\includegraphics[width=0.45\textwidth]{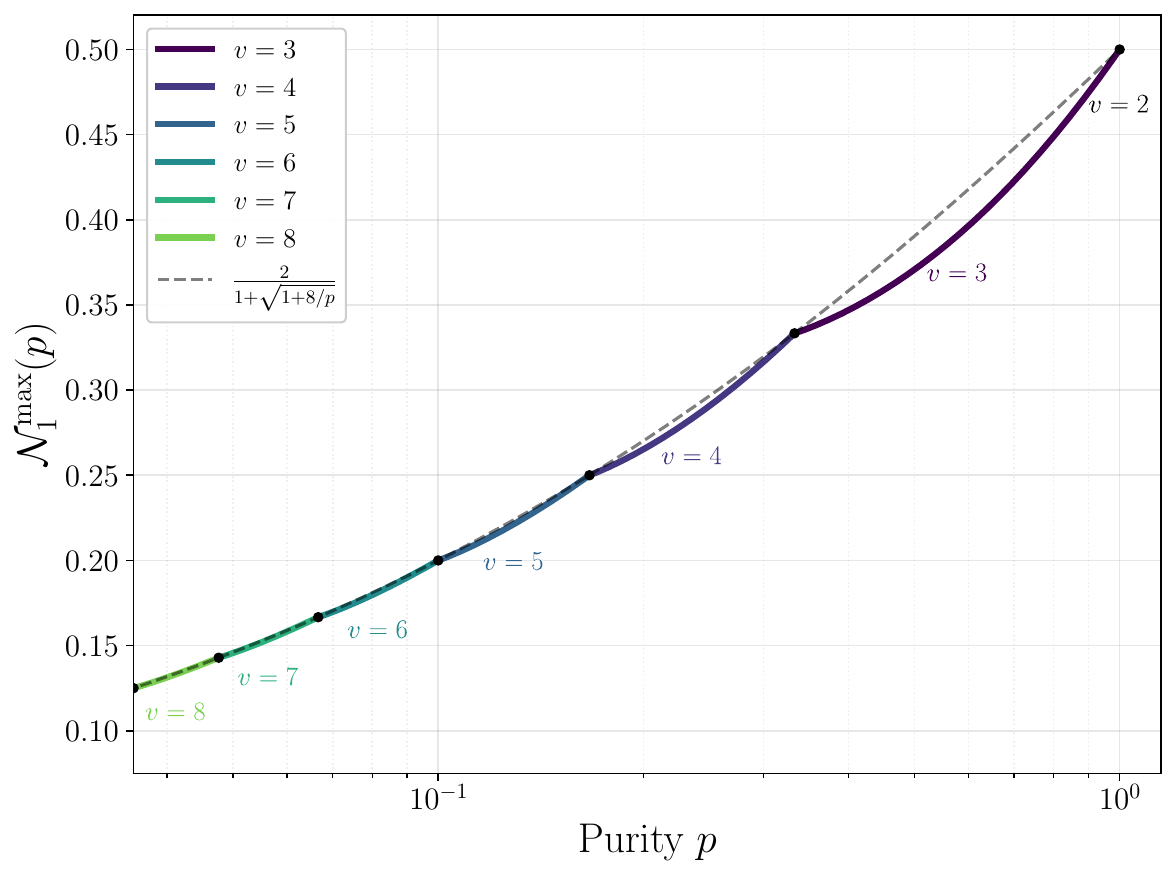}
	\caption{$\cN_1^{\max}(p)$ plotted against a logarithmic purity scale. The
	bound is a piecewise curve, where each distinctly colored segment corresponds
	to a specific effective local dimension $v$.}
	\label{fig:bound_curve}
\end{figure}

\section{Mixed states: analytical and numerical bounds for $k\ge 2$}
\label{sec:general_k}

In the preceding sections, we demonstrated that the maximization of the Ky Fan $k$-negativity can be elegantly reduced to a graph spectral optimization for pure states (for any $k$) and for mixed states in the fundamental $k=1$ regime. For general mixed states with $k>1$, however, this intuitive graph-theoretic mapping fundamentally breaks down.
To make progress, we proceed in two complementary directions: we first
derive rigorous analytical upper bounds on $\cN_k(\rho)$ under a purity
constraint, and then evaluate $\cN_k^{\max}(p)$ numerically. 

\begin{theorem}\label{thm:upper_bound}
Let $\rho$ be a quantum state with $\Tr(\rho^2)= p$. Then
\begin{equation}\label{eq:upper_bound}
    \cN_k(\rho)\le\sqrt{g_k p}\le\sqrt{kp},\quad
    g_k:=\sup_{\substack{P^2=P=P^\dagger,\\ \rank(P)=k}}\,\Tr\!\left[(P^\Gamma)_-^2\right],
\end{equation}
where the supremum is taken over rank-$k$ projectors $P$, and
$(P^\Gamma)_-:=(|P^\Gamma|-P^\Gamma)/2$ denotes the negative part of a Hermitian
operator $P^\Gamma$.
\end{theorem}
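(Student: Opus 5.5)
Our approach is to write $\cN_k$ variationally and then apply Cauchy--Schwarz in the Hilbert--Schmidt inner product. By Ky Fan's principle and Eq.~\eqref{eq:kneg_nodim} (local dimensions taken arbitrarily large), we have
\begin{equation*}
\cN_k(\rho)=-\min_{P}\Tr(P\rho^\Gamma)=\max_{P}\Tr\bigl(-\rho\,P^\Gamma\bigr),
\end{equation*}
where the optimization runs over rank-$k$ projectors $P$ and we used $\Tr(P\rho^\Gamma)=\Tr(P^\Gamma\rho)$. Fix an optimal $P$ and decompose $P^\Gamma=(P^\Gamma)_+-(P^\Gamma)_-$ with both parts positive semidefinite. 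Since $\rho\succeq0$, we have $\Tr(\rho(P^\Gamma)_+)\ge0$. Hence
\begin{equation*}
\cN_k(\rho)\le\Tr\bigl(\rho(P^\Gamma)_-\bigr)\le\sqrt{\Tr\rho^2}\,\sqrt{\Tr[(P^\Gamma)_-^2]}\le\sqrt{g_k p}.
\end{equation*}
The middle step is Cauchy--Schwarz, and the last step takes the supremum over $P$. This proves the first inequality.

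For $g_k\le k$, note that partial transposition preserves the Hilbert--Schmidt norm, because it only permutes matrix entries. Therefore
\begin{equation*}
\Tr[(P^\Gamma)_-^2]\le\Tr[(P^\Gamma)_+^2]+\Tr[(P^\Gamma)_-^2]=\Tr[(P^\Gamma)^2]=\Tr(P^2)=k.
\end{equation*}
Taking the supremum over $P$ gives $g_k\le k$, and hence $\sqrt{g_kp}\le\sqrt{kp}$.

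The only subtle point is justifying the variational identity in the presence of possibly fewer than $k$ negative eigenvalues. This is exactly what the convention of arbitrarily large local dimensions (Eq.~\eqref{eq:ancilla}) handles: zero eigenvalues are always available to fill the remaining slots, so the minimum over rank-$k$ projectors equals $\sum_{i\le k}\lambda_i^\uparrow(\rho^\Gamma)$. Every other step is a standard inequality, so I expect no real obstacle. The main care needed is checking that the supremum defining $g_k$ is finite and dimension-independent, and the bound $g_k\le k$ guarantees this.
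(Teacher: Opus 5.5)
Your proposal is correct and matches the paper's proof step for step: the paper also takes the spectral projector $\tilde P$ onto the $k$ smallest eigenvalues of $\rho^\Gamma$ (your optimal $P$), moves the partial transpose onto it, drops the positive part, applies Cauchy--Schwarz with $\|\rho\|_F=\sqrt{p}$, and gets $g_k\le k$ from $\Tr[(P^\Gamma)_-^2]\le\|P^\Gamma\|_F^2=k$. Your use of the arbitrarily-large-dimension convention to cover states with fewer than $k$ negative eigenvalues is also how the paper justifies invoking Eq.~\eqref{eq:kneg_nodim}.
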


\begin{proof}
Let $\tilde P$ be the projector onto the subspace of the $k$ smallest
eigenvalues of $\rho^\Gamma$, so that $\cN_k(\rho)=\Tr(-\tilde P\rho^\Gamma)$
by Eq.~\eqref{eq:kneg_nodim}. Since the partial transpose is
self-adjoint with respect to the Hilbert-Schmidt inner product, we have
\begin{equation}
\begin{aligned}
    \cN_k(\rho)&=\Tr(-\tilde P^\Gamma\rho)\le\Tr[(\tilde P^\Gamma)_-\rho]\\
    &\le\|(\tilde P^\Gamma)_-\|_F\,\|\rho\|_F
    \le\sqrt{g_k p}\le\sqrt{kp}.
\end{aligned} 
\end{equation}
Here, the first inequality drops the nonnegative part of $\tilde P^\Gamma$, the second is the Cauchy-Schwarz inequality
together with $\|\rho\|_F=\sqrt{p}$, the third uses
the definition of $g_k$, and the last one follows from
$\Tr[(P^\Gamma)_-^2]\le\|P^\Gamma\|_F^2=k$.
\end{proof}

For every $k$, \(1/2\leq g_k/k\leq1\), and
\(\lim_{k\to\infty}g_k/k=1\). Consequently, the bound in
Eq.~\eqref{eq:upper_bound} scales as \(\sqrt{kp}\). In the
low-purity regime, this scaling is optimal up to at most a factor of
\(\sqrt{2}\), and the bound becomes asymptotically tight as
\(k\to\infty\). See Appendix~\ref{app:scaling} for more details.
The bound is also consistent with the
exact solution for $k=1$ in Sec.~\ref{sec:mixed_k1}: for a rank-one
projector with Schmidt coefficients $s_i$, one has
$\Tr[(P^\Gamma)_-^2]=\sum_{i<j}s_i^2s_j^2=(1-\sum_i s_i^4)/2$, which
approaches $g_1=1/2$ as the dimension grows, so that
Theorem~\ref{thm:upper_bound} reproduces the low-purity scaling
$\sqrt{p/2}$ of $\cN_1^{\max}(p)$.

Having established the analytical upper bounds, we now turn to the
numerical evaluation of $\cN_k^{\max}(p)$. The numerical methods provide
lower bounds on $\cN_k^{\max}(p)$. Together with the analytical upper
bounds derived above, they bracket $\cN_k^{\max}(p)$ from both sides.
Let us define a set,
\begin{equation}
\cS_k=\{P\mid 0\preceq P\preceq\I,~\Tr(P)\le k\},    
\end{equation}
which is the convex hull of all rank-$r$ orthonormal projectors with $r\le k$.
One can easily verify that
\begin{equation}
\cN_k(\rho)
=\max_{r\le k}\sum_{i=1}^r-\lambda_i^{\uparrow}(\rho^\Gamma)
=\max_{P\in\mathcal{S}_k}\Tr(-\rho^\Gamma P).
\label{eq:opt_fantope}
\end{equation}
Coupled with the purity constraint $\Tr(\rho^2)=p$, we face the joint
optimization
\begin{equation}
    \begin{aligned}
        \max_{\rho, P} \quad & \Tr(-\rho^\Gamma P) \\
        \subto \quad & \rho \succeq 0, \quad \Tr(\rho) = 1, \quad \Tr(\rho^2) = p, \\
        & 0 \preceq P \preceq \I, \quad \Tr(P) \le k.
    \end{aligned}
    \label{eq:optimization_problem}
\end{equation}
Before proceeding, we add the following remarks. First, according to 
Eq.~\eqref{eq:kneg_nodim}, it seems that we can use the standard Fantope 
$\{P\mid 0\preceq P\preceq\I,~\Tr(P)=k\}$
\cite{DatorroConvexOptimizationEuclidean2005}
in Eq.~\eqref{eq:opt_fantope}.
However, Eq.~\eqref{eq:kneg_nodim} holds only if the dimension of the 
quantum system can be sufficiently large.
In numerical optimization, this requires introducing additional dimensions,
resulting in increased computational cost.
Therefore, we introduce the new set $\cS_k$, which makes Eq.~\eqref{eq:opt_fantope}
hold in general. To attain the absolute global bound, we increase the dimension $d$
of the optimization search space. This dimensional expansion is continued until
the computed bound ceases to grow, indicating that the search space is
sufficiently large to capture the true bound of $\cN_k$.
Second, the purity equality constraint $\Tr(\rho^2) = p$ is not convex. 
Thus, we relax it to the convex inequality constraint $\Tr(\rho^2) \le p$, 
or equivalently, $\norm{\rho}_F \le \sqrt{p}$. One can anticipate that the 
maximum Ky Fan $k$-negativity $\cN_k^{\max}(p)$ is non-decreasing with 
respect to $p$; therefore, this relaxation does not alter the maximum value 
in Eq.~\eqref{eq:optimization_problem}. See Appendix~\ref{app:purity_relaxation}
for the rigorous proof.

To solve the optimization in Eq.~\eqref{eq:optimization_problem}, we employ a
heuristic seesaw algorithm (alternating optimization)
\cite{WernerBellInequalitiesEntanglement2001}. By freezing one variable at a
time, the bilinear objective function is decoupled into two tractable
sub-problems that can be solved iteratively. The algorithm proceeds as follows:

We embed the optimization in a $d \times d$ system, starting from a minimal
dimension. We run the inner loop to find the maximal value $\cN_k^{(d)}$ for the
current $d$. Then, we increment $d$ and repeat the process. The expansion
terminates when $\cN_k^{(d)}$ converges.
		
For a fixed dimension $d$, we initialize the algorithm with a randomly generated
valid rank-$k$ projector $P^{(0)}$. To mitigate the risk of converging to local
optima within the non-convex landscape, the entire seesaw procedure is repeated
across multiple random initializations.

First, given a fixed projector $P^{(n)}$, the problem reduces to finding the
optimal state $\rho^{(n)}$
\begin{equation}
    \begin{aligned}
        \rho^{(n)} = \arg\max_{\rho} \quad 
        & \Tr[-\rho (P^{(n)})^\Gamma] \\
        \textrm{s.t.} \quad & \rho \succeq 0, \quad \Tr(\rho) = 1,\\
        & \|\rho\|_F \le \sqrt{p}.
    \end{aligned}
\end{equation}
This is a convex optimization problem, which can be efficiently solved
\cite{BoydConvexOptimization2004}.

Second, given the updated state $\rho^{(n)}$, we seek to maximize
$\mathrm{Tr}(-(\rho^{(n)})^\Gamma P)$. The optimal rank-$k$ projector
$P^{(n+1)}$ that maximizes this trace is exactly the projector onto the
subspace spanned by the eigenvectors corresponding to the $k$ most negative
eigenvalues of $(\rho^{(n)})^\Gamma$; more precisely,
\begin{equation}
    P^{(n+1)} = \sum_{\substack{1\le i\le k\\ \lambda_i^{\uparrow}\le 0}}
    \ketbra{\psi_i}{\psi_i},
\end{equation}
where $(\rho^{(n)})^{\Gamma}=\sum_{i=1}^{d^2}\lambda_i^{\uparrow}
\ketbra{\psi_i}{\psi_i}$. 
The optimizations are iterated until the objective function value converges within a predefined numerical tolerance. 

The procedure is summarized in Algorithm~\ref{alg:seesaw}.

\begin{algorithm}[H]
    \caption{Two-layer Optimization for $\cN_k^{\max}(p)$}\label{alg:seesaw}
    \begin{algorithmic}[1]
        \Require Purity $p$, target $k$, tolerance $\epsilon$
        \State Initialize subsystem dimension $d = d_{\min}$, $\cN_k^{\rm prev} = 0$
        \Repeat \Comment{Outer loop: Dimension expansion}
        \State Initialize a random projector $P \in \cS_k$
        \Repeat \Comment{Inner loop: Seesaw algorithm}
        \State  $\rho \gets \arg\max_{\rho} \Tr(-\rho P^\Gamma)$ 
        
        \State Calculate $(\rho)^{\Gamma}=\sum_{i=1}^{d^2}\lambda_i^{\uparrow} \ketbra{\psi_i}{\psi_i}$
        \State  $P \gets \sum_{\substack{1\le i\le k\\ \lambda_i^{\uparrow}\le 0}} \ketbra{\psi_i}{\psi_i}$
        \Until{the objective function converges}
        \State Let $\cN_k^{(d)}$ be the converged maximum value
        \State $\Delta\cN \gets |\cN_k^{(d)} - \cN_k^{\rm prev}|$ \Comment{Evaluate saturation}
        \State $\cN_k^{\rm prev} \gets \cN_k^{(d)}$
        \State $d \gets d + 1$
        \Until{$\Delta\cN < \epsilon$}
        \State \Return $\cN_k^{\rm prev}$
    \end{algorithmic}
\end{algorithm}

We now apply Algorithm~\ref{alg:seesaw} to evaluate the bound of Ky Fan
$k$-negativity for $k \in \{1,2,3,4,5,6\}$ across the purity $p \in (0.1, 1]$. The
numerical results are plotted in Fig.~\ref{fig:purity_phase_diagram}.
It is important to emphasize that our heuristic approach yields only a
numerical estimate, which provides a lower bound on the true absolute
maximum rather than a mathematically certified optimum.
Nevertheless, the numerical results provide two independent and
compelling consistency checks, strongly suggesting that the obtained
lower bounds are optimal. First, the numerical
curve for $k=1$ perfectly overlays the exact bound $\cN_1^{\max}(p)$
established in Sec.~\ref{sec:mixed_k1}. Second, in the pure state limit $p=1$,
the numerical values for all $k \ge 1$ precisely hit the theoretical bounds
$\cN_k^{\max}$ derived in Sec.~\ref{sec:pure_k}. These exact alignments strongly
validate the algorithm's accuracy.

\begin{figure}
    \centering
    \includegraphics[width=\linewidth]{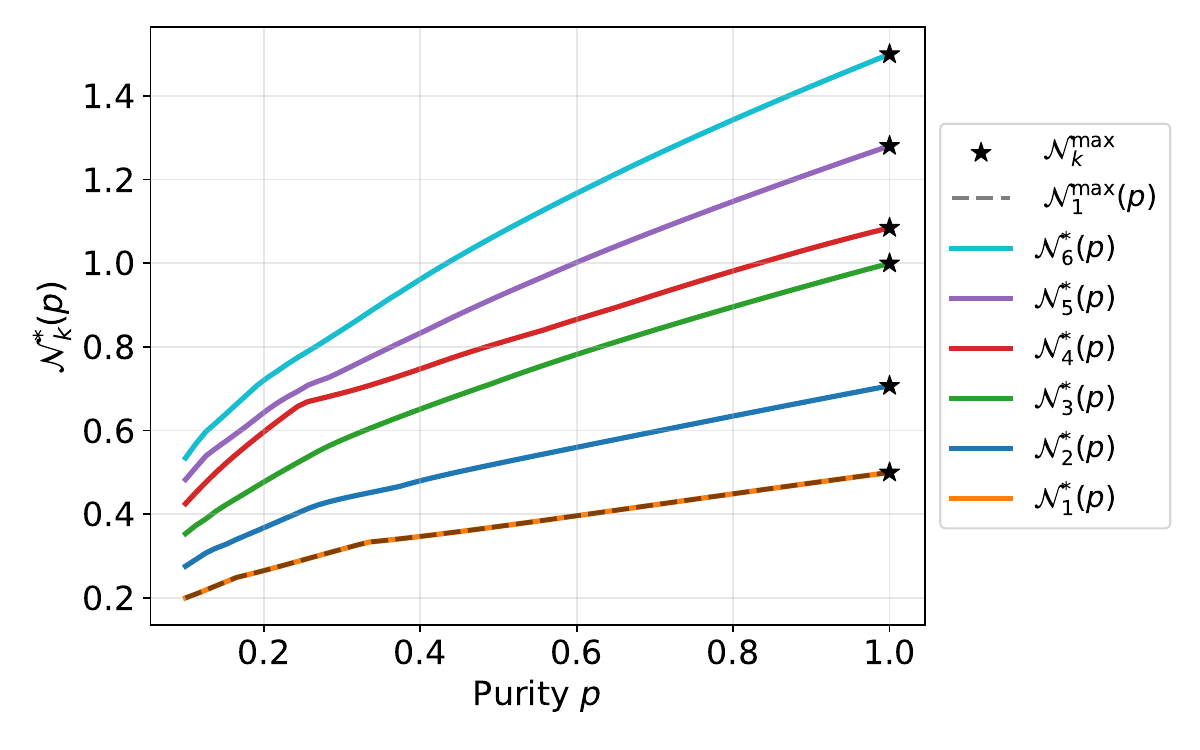}
    \caption {Estimated bounds $\cN_k^*(p)$ of Ky Fan $k$-negativity for $k=1,2,\dots,6$.
    The solid curves represent the numerical bounds obtained via
    seesaw algorithm. The numerical curve for $k=1$ coincides exactly with the
    analytical bound $\cN_1^{\max}(p)$ (dashed line). At $p=1$, the discrete
    markers indicate the exact bounds $\cN_k^{\max}$ derived from spectral graph
    theory.
    }
    \label{fig:purity_phase_diagram}
\end{figure}

Whether the optimal states for $k>1$ exhibit a recognizable structural pattern such as a generalization of structures found analytically for pure states and for $k=1$ is an interesting question that we leave for future investigation.

\section{Applications}\label{sec:app}

A core advantage of the Ky Fan negativity lies in how it differs from 
conventional entanglement measures: it does not always decrease under general 
LOCC operations. This non-monotonic behavior allows it to align with 
complex entanglement properties that do not follow standard LOCC ordering. 
A prime example is the so-called genuine multilevel entanglement
\cite{KraftCharacterizingGenuineMultilevel2018}, 
which is also closely related to the notion of irreducible quantum dimensions
\cite{CongWitnessingIrreducibleDimension2017}.

The notion of genuine multilevel entanglement concerns whether a high-dimensional entangled state can be simulated by multiple copies of low-dimensional entangled states. Consider a bipartite system with local dimension $d = d_1 \times d_2 \times \cdots \times d_m$, corresponding to a factorization of each party into $m$ subsystems of dimensions $d_1, d_2, \dots, d_m$, respectively. 
A pure state $\ket{\psi}$ is called decomposable with respect to this factorization if there exist local unitaries $U_A, U_B$ such that
\begin{equation}
     \ket{\psi} = (U_A \otimes U_B)\ket{\phi_1} \otimes \ket{\phi_2} \otimes \cdots \otimes \ket{\phi_m},
\end{equation}
where each $\ket{\phi_i}$ is a $d_i \times d_i$ entangled state. For mixed states,
a state $\rho$ is decomposable if it can be written as a convex combination of decomposable pure states. The set of all decomposable states is convex by definition.
A state that is not decomposable is called genuinely multilevel entangled.

A natural application of the Ky Fan $k$-negativity is to detect genuine multilevel entanglement. A key observation is that the Ky Fan $k$-negativity is invariant under local unitary transformations. Indeed, for $\rho' = (U_A \otimes U_B) \rho(U_A^\dagger \otimes U_B^\dagger)$, the partial transpose transforms as
$\rho'^{\Gamma} = (U_A \otimes U_B^*) \rho^{\Gamma} (U_A^\dagger \otimes U_B^T)$.
Since a unitary transformation preserves the spectrum of
$\rho^{\Gamma}$, $\cN_k(\rho') = \cN_k(\rho)$.
This invariance implies that, when computing the maximum of $\cN_k$ over all decomposable pure states, we can restrict product states $\ket{\phi_i}$ in the Schmidt basis without loss of generality.

The certification protocol then works as follows. For a given factorization
$d = d_1 \times d_2 \times \cdots \times d_m$, one first computes the maximum
Ky Fan $k$-negativity attainable by any decomposable state, which we denote by
$\cN_k^{\mathrm{dec}}$. $\cN_k^{\mathrm{dec}}$ can be obtained by optimizing over
the Schmidt coefficients of each state $\ket{\phi_i}$. If a state $\rho$ is prepared
in this system and the observed $\cN_k(\rho)$ strictly exceeds
$\cN_k^{\mathrm{dec}}$, then $\rho$ cannot be decomposable and is therefore
genuinely multilevel entangled. By the convexity of $\cN_k$, this criterion
holds for any state without requiring prior knowledge of the purity.
		
As a concrete example, consider the smallest nontrivial case $d=2\times 2$;
the generalization to the higher-dimensional and multi-party cases is straightforward.
Now, each party consists of two qubits, $A=A_1A_2$ and $B=B_1B_2$.
A decomposable pure state is a product of two two-qubit entangled states,
\begin{equation}
    \ket{\psi} = \ket{\varphi_1}\otimes\ket{\varphi_2},
\end{equation}
with Schmidt-decomposed states $\ket{\varphi_1}=x_0\ket{00}+x_1\ket{11}$ and
$\ket{\varphi_2}=y_0\ket{00}+y_1\ket{11}$ where $x_0\ge x_1\ge 0$, $y_0\ge y_1\ge 0$.
The six negative eigenvalues of $\ketbra{\psi}^{\Gamma}$ in absolute value are
\begin{align}
    \label{eq:product_neg}
    \begin{split}
        x_0 x_1 y_0^2, \quad x_0^2 y_0 y_1, \quad x_0 x_1 y_0 y_1, \\
        x_1^2 y_0 y_1, \quad x_0 x_1 y_1^2, \quad x_0 x_1 y_0 y_1.
    \end{split}
\end{align}
To see how the bound is obtained, consider $k=2$. By the ordering $x_0 \ge x_1$ and $y_0 \ge y_1$, the two largest eigenvalues in Eq.~\eqref{eq:product_neg} are $x_0 x_1 y_0^2$ and $x_0^2 y_0 y_1$, so
\begin{equation}
    \cN_2^{\mathrm{dec}} = \max_{\substack{x_0^2+x_1^2=1\\y_0^2+y_1^2=1}} (x_0 x_1 y_0^2 + x_0^2 y_0 y_1).
\end{equation}
The solution yields $\cN_2^{\mathrm{dec}} = 3\sqrt{3}/8$.
Maximizing $\cN_k$ over all such product states yields the bound summarized in Table~\ref{tab:product_bounds}.

\begin{table}
    \centering
    \caption{Maximum Ky Fan $k$-negativity for a product of two two-qubit entangled states, compared with the bound $\cN_k^{\max}$.}
    \label{tab:product_bounds}
    \begin{tabular}{c|c|c|c}
        \hline\hline
        $k$ & Product bound $\cN_k^{\mathrm{dec}}$ & Exact bound $\cN_k^{\max}$ & Tight? \\
        \hline
        $1$ & $1/2$ & $1/2$ & Yes \\
        $2$ & $3\sqrt{3}/8 \approx 0.650$ & $\sqrt{2}/2 \approx 0.707$ & No \\
        $3$ & $\approx 0.852$ & $1$ & No \\
        $4$ & $\approx 1.075$ & $\approx 1.085$ & No \\
        $5$ & $\approx 1.281$ & $\approx 1.281$ & Yes \\
        $6$ & $3/2$ & $3/2$ & Yes \\
        \hline\hline
    \end{tabular}
\end{table}

The strict gap between $\cN_k^{\mathrm{dec}}$ and
$\cN_k^{\max}$ for $k = 2, 3, 4$ yields direct certification criteria:
if $\cN_2(\rho) > 3\sqrt{3}/8$, $\cN_3(\rho) > 0.853$, or
$\cN_4(\rho) > 1.076$, then $\rho$ is genuinely multilevel entangled.

The second application we present is the quantitative estimation of
negativity. Recent works have devoted considerable attention to the
certification of entanglement from moments of the partial transpose \cite{ElbenMixedStateEntanglement2020,YuOptimalEntanglementCertification2021,NevenSymmetryresolvedentanglement2021}.
Among the resulting criteria, the $p_3$-PPT condition plays a central role.
Specifically, any separable state must satisfy
\begin{equation}
    p_3\ge p_2^2,
    \label{eq:p3ppt}
\end{equation}
where $p_k=\Tr[(\rho^\Gamma)^k]$. However, this condition provides only
qualitative certification. A violation of Eq.~\eqref{eq:p3ppt} certifies
the presence of entanglement but does not quantify its amount.

By taking advantage of the Ky Fan negativity, we obtain the following
lower bound on the negativity:
\begin{equation}
    \cN(\rho)\ge
    \frac{p_2^2-p_3}
    {f(p_2,p_3,\cN_1(\rho))},
    \label{eq:p3ppt_quant_pre}
\end{equation}
where $f$ is a positive function whose exact
expression is given in Appendix~\ref{app:negativity_bound}.
It is easy to see that any violation
of the $p_3$-PPT condition gives a nonzero lower bound on $\cN(\rho)$.
Moreover, $\cN_1(\rho)$ in Eq.~\eqref{eq:p3ppt_quant_pre} can be replaced by
any upper bound on it. In particular, using
$\Tr(\rho^2)=\Tr[(\rho^\Gamma)^2]$, we may replace
$\cN_1(\rho)$ with $\cN_1^{\max}(p_2)$, obtaining
\begin{equation}
\cN(\rho)\ge
\frac{p_2^2-p_3}
{f\bigl(p_2,p_3,\cN_1^{\max}(p_2)\bigr)}.
\label{eq:p3ppt_quant}
\end{equation}
It upgrades the $p_3$-PPT condition, which only provides qualitative
certification, to a quantitative lower bound on the negativity without
introducing additional moment information.
This feature also makes our criterion fundamentally different from the
results in Refs.~\cite{TarabungaQuantifyingmixedstate2026,
MillerDetectingentanglementfew2026}, which require higher-order or
absolute moments of the partial transpose and are therefore more
experimentally demanding.
Combined with our spectral bounds on the partial transpose, such
additional information can also be exploited to derive tighter lower
bounds on the negativity. This, however, lies beyond the scope of the
present work and will be reported elsewhere.

\section{Conclusion and Discussion}\label{sec:conclusion}

In this work, we have introduced the Ky Fan $k$-negativity to study
spectral properties of the partial transpose.
A common feature of the analytically solvable cases
is that the underlying high-dimensional optimization collapses to a
highly structured extremal configuration governed by only a few
effective parameters. For pure states, the optimal state has at most three
distinct Schmidt coefficients arranged in a core-spike structure, whose
spectral properties can be analyzed using classical results from graph
theory. For mixed states with $k=1$, the optimality conditions similarly
reduce the problem to a complete graph with only two distinct weights.
These structural reductions are precisely what make the exact solutions
possible. For mixed states with $k\ge 2$, the problem does not appear to
admit an analogous structural reduction.
We therefore complement numerical optimization with
dimension-independent analytical upper bounds and characterize their
scaling and asymptotic behavior.

Because these bounds require only the purity or a few measurable
moments of the partial transpose, they can be applied without full
state tomography. Such quantities are experimentally accessible
through randomized measurements
\cite{ElbenMixedStateEntanglement2020,
HuangPredictingmanyproperties2020}. This makes our results directly
applicable to the certification of quantum information processing
systems. As two concrete applications, we show that the Ky Fan
$k$-negativity robustly certifies genuine multilevel entanglement and
upgrades the $p_3$-PPT condition from a qualitative entanglement
criterion to a quantitative lower bound on the negativity.

Several questions remain open. The most intriguing is whether the
purity-constrained optimization for general $k$ admits a structural
reduction analogous to those found in the exactly solvable cases. The
patterns observed in the seesaw solutions may provide useful clues
toward such a reduction. Short of a complete solution, a more
accessible yet valuable goal is to determine $g_k$ exactly for finite
$k$. In addition, clarifying the operational meaning of the Ky Fan
$k$-negativity is an important task. In view of the role of the partial
transpose in entanglement distillation, it would be particularly
interesting to determine whether $\cN_k$ constrains the distillable
entanglement or the efficiency of distillation protocols. More
generally, we expect that the present framework can be extended to
constraints beyond the purity
\cite{ElbenRenyiEntropiesRandom2018,
ElbenMixedStateEntanglement2020,
YuOptimalEntanglementCertification2021}
and to multipartite systems
\cite{JungnitschTamingMultiparticleEntanglement2011}.

\begin{acknowledgments}
This work was supported by
the National Natural Science Foundation of China
(Grants No.~12574537, No.~12205170, and No.~12174224),
Shandong Provincial Department of Science and
Technology under Project No.~TQ012025002,
and the Shandong Provincial Natural Science Foundation of China
(Grant No.~ZR2022QA084). 
\end{acknowledgments}

\appendix
\section{$\cN_k^{\max}$ can only be achieved by pure states}\label{app:pure_state_proof}

Suppose a state $\rho$ achieves the global maximum $\cN_k^{\max}$. By the
convexity of $\cN_k$, for any convex decomposition
$\rho=\sum_ip_i\ketbra{\psi_i}$ (with $p_i > 0$ and $\sum_i p_i = 1$), the
equality $\cN_k(\rho) = \cN_k^{\max}$ requires that every constituent
pure state $\ket{\psi_i}$ individually achieves $\mathcal{N}_k^{\max}$. As
established in our structural analysis, this requires all optimal pure states to
possess unique, strictly positive Schmidt coefficients $\{s_i\}$ dictated by
the quotient matrix of the optimal core-spike graph $G_k^*$.

We now prove that $\rho$ must be pure. Suppose, to the contrary, that
$\rho$ is mixed, with rank $r\ge2$. According to the
Gisin-Hughston-Jozsa-Wootters theorem, any pure state within the support of
$\rho$ can participate in a valid convex decomposition
\cite{GisinQuantumMeasurementsStochastic1984,
Hughstoncompleteclassificationquantum1993}.
Therefore, for any two orthogonal pure states $\ket{\psi_1}$ and $\ket{\psi_2}$
in the support of $\rho$, any normalized superposition 
\begin{equation}
	\ket{\phi(\theta, \varphi)} = \cos\theta\ket{\psi_1} + \sin\theta \ee^{\ii\varphi}\ket{\psi_2}
\end{equation}
must also be an optimal pure state sharing the exact same Schmidt coefficients
$\{s_i\}$. Consequently, the reduced density matrix $\rho_A(\theta, \varphi) =
\Tr_B(\ketbra{\phi(\theta, \varphi)}{\phi(\theta, \varphi)})$ must maintain a
strictly invariant eigenvalue spectrum for any $\theta$ and $\varphi$. To
evaluate this constraint algebraically, we let $\rho_{A,1} =
\Tr_B(\ketbra{\psi_1})$, $\rho_{A,2} = \Tr_B(\ketbra{\psi_2})$, and define the
cross-interference operator $X = \Tr_B(\ketbra{\psi_1}{\psi_2})$. The reduced
density matrix expands as:
\begin{equation}
\begin{aligned}
		\rho_A(\theta, \varphi)& = (\cos^2\theta \rho_{A,1} + \sin^2\theta \rho_{A,2})\\
		&+ \cos\theta\sin\theta \ee^{\ii\varphi} X^\dagger + \cos\theta\sin\theta \ee^{-\ii\varphi} X.
\end{aligned}
\end{equation}

Note that all its moments must be constants independent of $\theta$ and $\varphi$. For any positive integer $n \ge 1$:
\begin{equation}
	P_n = \Tr(\rho_A(\theta, \varphi)^n) = \sum_{j=1}^K s_j^{2n}.
\end{equation}
Expanding the $n$-th moment yields a polynomial in $\ee^{\ii\varphi}$:
\begin{equation}
	\Tr(\rho_A(\theta, \varphi)^n) = \sum_{m=-n}^{n} \Gamma_m(\theta) \ee^{\ii m\varphi} = P_n.
\end{equation}
Consider the highest-frequency component $m=n$. To form the $\ee^{\ii n\varphi}$ term from an $n$-fold matrix product, the term $\cos\theta\sin\theta \ee^{\ii\varphi} X^\dagger$ must be consecutively selected in every single multiplication. Thus, its coefficient is uniquely determined without any cross-term: $\Gamma_n(\theta) = \cos^n\theta\sin^n\theta \Tr((X^\dagger)^n)$.

Multiplying the polynomial by $\ee^{-\ii n\varphi}$ and integrating over the full phase period $\varphi \in [0, 2\pi]$ annihilates all lower-frequency terms and the constant $P_n$, leaving only:
\begin{equation}
\begin{aligned}
	&\frac{1}{2\pi} \int_{0}^{2\pi} \Tr(\rho_A(\theta, \varphi)^n) \ee^{-\ii n\varphi} d\varphi =\cos^n\theta\sin^n\theta \Tr((X^\dagger)^n)\\
	&=\frac{1}{2\pi} \int_{0}^{2\pi}P_n\ee^{-\ii n\varphi} d\varphi  = 0.
\end{aligned}
\end{equation}
Noting that the equality holds for any $\theta \in (0, \pi/2)$, we rigorously obtain:
\begin{equation}
	\Tr((X^\dagger)^n) = 0 \quad \FA n \ge 1.
\end{equation}

By Newton's identities, if all moments of a matrix vanish, then all of
its eigenvalues must be zero. This indicates that
$X^\dagger$ (and equivalently $X$) is a nilpotent matrix. Consequently, its
determinant evaluated on any invariant subspace must be identically zero.
	
We now expose the ultimate algebraic contradiction by explicitly evaluating $X$
in the Schmidt basis of $\ket{\psi_1}$. 
We can express $\ket{\psi_1}$ in its Schmidt decomposition:
\begin{equation}
	\ket{\psi_1} = \sum_{i=1}^K s_i \ket{i}_A \ket{i}_B.
\end{equation}
Because the mixed state $\rho$ achieves the absolute maximum $\cN_k^{\max}$,
there exists an optimal Fantope projector $P^*$ such that $\Tr(-P^*
\rho^\Gamma) = \cN_k^{\max}$. By convexity, both pure states $\ket{\psi_1}$ and
$\ket{\psi_2}$ in the support of $\rho$ must independently saturate this maximum
using this identical projector. 
Define the local support of $P^*$
\begin{equation}
    \cP_A:=\supp\Tr_B(P^*),
    \qquad
    \cP_B:=\supp\Tr_A(P^*).
\end{equation}
In the Schmidt basis of $\ket{\psi_1}$, the optimal projector takes the form
\begin{equation}
    P^*=\sum_{(i,j)\in E(G_k^*)}
    \ketbra*{\psi_{ij}^-},
    \qquad
    \ket*{\psi_{ij}^-}=\frac{\ket{ij}-\ket{ji}}{\sqrt{2}}.
\end{equation}
Since the optimal graph $G_k^*$ is connected, the local supports of $P^*$ coincide with the Schmidt subspaces of $\ket{\psi_1}$,
\begin{equation}
    \cP_A=\Sp\{\ket{1}_A,\dots,\ket{K}_A\},
    \cP_B=\Sp\{\ket{1}_B,\dots,\ket{K}_B\}.
\end{equation}
Since $P^*$ is also optimal for $\ket{\psi_2}$, its local supports
likewise coincide with the Schmidt subspaces of $\ket{\psi_2}$. 
It implies we can expand $\ket{\psi_2}$ within the same basis:
\begin{equation}
	\ket{\psi_2} = \sum_{i,j=1}^K C_{ij} \ket{i}_A \ket{j}_B,
\end{equation}
for some $K \times K$ complex coefficient matrix $C$. Because $\ket{\psi_2}$ must possess the identical positive Schmidt spectrum $\{s_i\}_{i=1}^K$, the matrix $C$ must have singular values exactly equal to $s_i$. This guarantees that $C$ is a full-rank, strictly invertible matrix.

We directly compute $X = \Tr_B(\ketbra{\psi_1}{\psi_2})$:
\begin{equation}
	X = \sum_{i,j=1}^K s_i C_{ji}^* \ket{i}_A\bra{j}_A = S C^\dagger,
\end{equation}
where $S = \diag(s_1, s_2, \dots, s_K)$ is the diagonal matrix of the optimal Schmidt coefficients. 
Since all $s_i > 0$, the diagonal matrix $S$ is invertible. Since $C$
is invertible, its conjugate transpose $C^\dagger$ is also invertible.
Thus, $X = S C^\dagger$, being the product of two invertible matrices, must
be an invertible matrix within this subspace. 
However, an invertible matrix cannot be a nilpotent matrix. This implies no
such pure state $\ket{\psi_2}$ can exist in the support of $\rho$. We conclude
that the maximum of $\cN_k$ is exclusively attained on pure states.

\section{Proof of Lemma~\ref{lem:structure_mixed}}\label{app:proof_structure}
In this appendix, we give the detailed proof of Lemma~\ref{lem:structure_mixed}.
In addition to Eq.~\eqref{eq:elementwise}, we shall use the stationarity condition with respect to $s_i$. Since $M$ is symmetric, differentiating the Lagrangian with respect to $s_i$ gives
\begin{equation}
    2\sum_{j\neq i}M_{ij}s_j-2\lambda s_i=0,
\end{equation}
and hence
\begin{equation}\label{eq:eigen_equation}
    \lambda s_i=\sum_{j\neq i}M_{ij}s_j.
\end{equation}
Equivalently, $M\bs=\lambda\bs$, implying $\lambda$ is the eigenvalue for Perron eigenvector $\bs$. We first consider the non-degenerate case $\beta\neq0$, for which the structure is determined by the sign of $b$:

\textbf{Case 1: $b \ge 0$.}
Here, $\alpha \le 0$, then all non-zero vertices have $s_i > 0$, and $s_i s_j > 0$ is always satisfied. It implies that the graph is a complete graph.	
Using Eq.~\eqref{eq:eigen_equation}, the eigenvalue equation for each vertex $i$ becomes
\begin{equation}\label{eq:eigen_complete}
    \lambda s_i = \sum_{j \neq i} M_{ij} s_j = a s_i \left( \sum_{j \neq i} s_j^2 \right) + b \left( \sum_{j \neq i} s_j \right).
\end{equation}
Let $S_1 = \sum_{j} s_j$ be the sum of all components. Using the eigenvector normalization constraint $\sum_{j} s_j^2 = 1$, we can express the excluded sums as
\begin{equation}
    \sum_{j \neq i} s_j^2 = 1 - s_i^2 \quad \text{and} \quad \sum_{j \neq i} s_j = S_1 - s_i.
\end{equation}
Substituting these into Eq.~\eqref{eq:eigen_complete}, we get a universal cubic polynomial equation
\begin{equation}\label{eq:cubic polynomial}
    P(s_i) = -a s_i^3 + (a - b - \lambda)s_i + bS_1 = 0.
\end{equation}
For $b\ge0$, we have $P(0)=bS_1\ge0$ and $P(s)\to-\infty$ as $s\to\infty$. Moreover,
$P'(s)=-3as^2+(a-b-\lambda)$ has at most one positive root. Hence, $P(s)$ can have at most one positive root. Since every vertex value in the graph must be a root of this identical polynomial, the non-zero components of the optimal eigenvector $\bs$ must therefore be equal.

\textbf{Case 2: $b < 0$.}
We show that in this case the lowest-value class contains a single vertex and that the non-zero components can take at most two distinct values. Assume that the eigenvector has $m \ge 2$ distinct positive values. We arrange all vertex sets in descending order of their component values: $\cV_1, \cV_2, \dots, \cV_{m-1}, \cV_m$. Let $v_k$ denote the value of vertices in $\cV_k$, so $v_1 > v_2 > \dots > v_{m-1} > v_m > 0$.
We divide the analysis into two sub-cases according to the internal connectivity of the lowest class $\cV_m$:

\textbf{Sub-case 2.1: $\cV_m$ is internally disconnected ($v_m^2 \le \alpha$).}
We will show that $\cV_m$ must connect to the set directly above it, $\cV_{m-1}$.
For $m=2$, this follows immediately: otherwise, since $\cV_m$ is internally disconnected, every vertex in $\cV_m$ would be isolated. Then Eq.~\eqref{eq:eigen_equation} leads to $\lambda v_m=0$, contradicting $v_m>0$.
For $m\ge3$, assume $\cV_m$ does not connect to $\cV_{m-1}$ (i.e., $v_{m-1} v_m \le \alpha$). $\cV_m$ must connect to a higher-level vertex set $\cS$. Since $v_{m-1} > v_m$, the neighborhood of $\cV_{m-1}$ must contain $\cS$, plus a potential extra set $\Delta \cS$.
Using Eq.~\eqref{eq:eigen_equation} and Eq.~\eqref{eq:elementwise} on $v_m$ and $v_{m-1}$, we obtain
\begin{align}
    \lambda &= a \sum_{j \in \cS} {s_j}^2 + \frac{b}{v_m} \sum_{j \in \cS} s_j\\
    \lambda &= a \sum_{j \in \cS} {s_j}^2 + \frac{b}{v_{m-1}} \sum_{j \in \cS} s_j + \sum_{j \in \Delta \cS} \frac{s_j}{v_{m-1}} (a v_{m-1} s_j + b)
\end{align}
Equating them yields
\begin{equation}
    b \left( \frac{1}{v_m} - \frac{1}{v_{m-1}} \right) \sum_{j \in \cS} s_j = \sum_{j \in \Delta \cS} \frac{s_j}{v_{m-1}} (a v_{m-1} s_j + b).
\end{equation}
Since $b < 0$ and $v_{m-1} > v_m$, the left-hand side is strictly negative. However, since $\Delta \cS$ connects to $v_{m-1}$, the edge weight $(a v_{m-1} s_j + b) > 0$, making the right-hand side non-negative. This contradiction shows that $\cV_{m-1}$ must be
connected to $\cV_m$, i.e., $v_{m-1}v_m>\alpha$.
Since $\cV_{m-1}$ connects to $\cV_m$, and $v_1 > \dots > v_{m-1}$, it follows that all upper sets $\cV_1, \dots, \cV_{m-1}$ connect to $\cV_m$. Let the union of all these upper sets be $\cU$. Thus, the neighborhood of $\cV_m$ is exactly $\cU$. Moreover, any two vertices in $\cU$ have values at least $v_{m-1}$, and therefore $v_{m-1}^2>v_{m-1}v_m>\alpha$. Hence, $\cU$ forms a complete subgraph.
We next show that the lowest class $\cV_m$ contains exactly one vertex.

Assume $|\cV_m|\ge2$. We first exclude the exceptional case $|\cU|=1$. Let $r:=|\cV_m|\ge2$. Then the graph is a weighted star $K_{1,r}$, and all its edge weights $w_i$ are equal to a constant $c$.
The normalization and purity constraints give
\begin{equation}
    c=\frac{1}{r},
    \qquad
    p=\frac{1}{r},
\end{equation}
so its spectral radius is
\begin{equation}
    \lambda_0=\frac{1}{\sqrt r}.
\end{equation}
We show that this configuration is not locally optimal. Choose two vertices in $\cV_m$, and change their two star-edge weights to
\begin{equation}
    w_1=c+q,
    \qquad
    w_2=c-q-t,
\end{equation}
while keeping $w_i=c$ for $i=3,\ldots,r$, and add an edge of weight $t>0$ between them, where
\begin{equation}
    q=\frac{\sqrt{4t/r-3t^2}-t}{2}.
\end{equation}
For sufficiently small $t>0$, all weights remain positive. Moreover,
$q^2+qt+t^2=t/r$, and hence both the total weight and the total squared weight remain unchanged.
Let $M_t$ denote the resulting weighted adjacency matrix. If the newly
added edge is omitted, the remaining weighted star has spectral radius
\begin{equation}
    W=\sqrt{\frac{1}{r}-t^2}.
\end{equation}
Using its normalized Perron vector 
\begin{equation}
    \bm{u} = \frac{1}{\sqrt{2}}
    \left(1,\frac{w_1}{W},\frac{w_2}{W},\dots,\frac{w_r}{W}
    \right)^T
\end{equation}
as a test vector for $M_t$ gives
\begin{equation}
    \lambda_{\max}(M_t)\ge \bm{u}^TM_t\bm{u}
    =\sqrt{\frac{1}{r}-t^2}
    +\frac{t(1/r-t)^2}{1/r-t^2}.
\end{equation}
The right-hand side equals $1/\sqrt r$ at $t=0$ and has derivative
$1/r>0$ there. Therefore, for all sufficiently small $t>0$,
\begin{equation}
    \lambda_{\max}(M_t)>\frac{1}{\sqrt r}=\lambda_0,
\end{equation}
contradicting local optimality. We therefore only consider $|\cU|\ge2$.

Let $N$ denote the total count of non-zero elements $M_{ij}$. We define:
\begin{align}
    S_2 &= \sum_{ M_{ij}>0} s_i s_j, \label{eq:S2_def} \\
    S_4 &= \sum_{ M_{ij}>0} s_i^2 s_j^2. \label{eq:S4_def}
\end{align}
For all non-zero elements, the structural equation $M_{ij} = a s_i s_j + b$ holds. Substituting this into the two global constraints yields:
\begin{align}
    \sum_{M_{ij}>0} M_{ij} &= a S_2 + b N = 2, \\
    \sum_{M_{ij}>0} M_{ij}^2 &= a^2 S_4 + 2ab S_2 + b^2 N = 2p.
\end{align}
The first equation gives $b = \frac{2 - a S_2}{N}$.
The objective function to maximize is $\lambda = \sum_{M_{ij}>0} M_{ij} s_i s_j = a S_4 + b S_2$. Multiplying by $a$ and substituting the squared sum constraint expansion:
\begin{equation}
\begin{aligned}
    a \lambda &= a^2 S_4 + a b S_2\\
    &= (2p - 2ab S_2 - b^2 N) + a b S_2\\
    &= 2p - b(a S_2 + b N).
\end{aligned}
\end{equation}
Applying the sum constraint $(a S_2 + b N = 2)$, we obtain the fundamental algebraic invariant
\begin{equation} \label{eq:lambda_invariant}
    \lambda = \frac{2p - 2b}{a} = \frac{2p N - 4}{a N} + \frac{2 S_2}{N}.
\end{equation}
To determine $a$, we substitute $b = \frac{2 - a S_2}{N}$ back into the squared sum constraint
\begin{equation}
    2p = a^2 S_4 + 2a S_2 \left( \frac{2 - a S_2}{N} \right) + N \left( \frac{2 - a S_2}{N} \right)^2.
\end{equation}
Multiplying by $N$ and expanding the terms yields
\begin{equation}
    2p N = a^2 S_4 N - a^2 S_2^2 + 4.
\end{equation} 
Let $w = \frac{2p N - 4}{N}$. Because $|\cU|\ge2$, the non-zero weights $M_{ij}=a s_i s_j+b$ are not all equal. Therefore, the Cauchy-Schwarz inequality is strict: $(\sum M_{ij})^2<N\sum M_{ij}^2$. Thus, $4<2pN$ and $w>0$.
Since we require $a > 0$, $a = \sqrt{\frac{w}{S_4 - S_2^2/N}}$.
Substituting $a$ into Eq.~\eqref{eq:lambda_invariant} expresses $\lambda$ completely as a function of the sums:
\begin{equation} \label{eq:lambda_S2_S4}
    \lambda(S_2, S_4) = \sqrt{w \left(S_4 - \frac{S_2^2}{N}\right)} + \frac{2 S_2}{N}.
\end{equation}
Taking the partial derivative of $\lambda$ with respect to $S_2$ while keeping $S_4$ constant:
\begin{equation} \label{eq:partial_lambda}
\begin{aligned}
    \left. \frac{\partial \lambda}{\partial S_2} \right|_{S_4} &= \frac{1}{2\sqrt{w(S_4 - S_2^2/N)}} \cdot w\left( -\frac{2 S_2}{N} \right) + \frac{2}{N}\\
    &= -a \frac{S_2}{N} + \frac{2}{N} = b.
\end{aligned}
\end{equation}
For $b < 0$, we have $\frac{\partial \lambda}{\partial S_2} < 0$.

Assume $|\cV_m| \ge 2$. Letting $S_\cU = \sum_{\xi \in \cU} s_{\xi}$ and $S_{\cV_m} = \sum_{\sigma \in \cV_m} s_\sigma$, we can decouple the sums:
\begin{align}
    S_4 &= \sum_{\xi \neq \eta \in \cU} s_{\xi}^2 s_{\eta}^2 + 2 \left(\sum_{\xi \in \cU} s_{\xi}^2\right) \left(\sum_{\sigma \in \cV_m} s_\sigma^2\right), \label{eq:S4_decoupled} \\
    S_2 &= \sum_{\xi \neq \eta \in \cU} s_\xi s_\eta + 2 S_\cU S_{\cV_m}. \label{eq:S2_decoupled}
\end{align}
Consider a restricted local perturbation strictly within $\cV_m$: we perturb the values $s_\sigma$ while keeping the entire upper set $\cU$ fixed, and keeping the local sum of squares $\sum_{\sigma \in \cV_m} s_\sigma^2 = C_1 $ constant. 
According to Eq.~\eqref{eq:S4_decoupled}, $S_4$ depends on $\cV_m$ exclusively through $C_1$. Therefore, under this perturbation, $S_4$ is an absolute constant ($\Delta S_4 = 0$). 
Conversely, Eq.~\eqref{eq:S2_decoupled} dictates that $S_2$ is a strictly monotonically increasing linear function of the local sum $S_{\cV_m}$. 
Based on the Cauchy-Schwarz inequality, subject to the constraint $\sum_{\sigma \in \cV_m} s_\sigma^2 = C_1$, the sum $S_{\cV_m}$ is maximized when all vertices in $\cV_m$ have the same value. If $|\cV_m|\ge2$, a perturbation preserving $\sum_{\sigma\in\cV_m}s_\sigma^2$ strictly decreases $S_{\cV_m}$ and hence decreases $S_2$, while leaving $S_4$ unchanged. Since $\left.\partial\lambda/\partial S_2\right|_{S_4}<0$, this strictly increases $\lambda$, contradicting optimality.
Therefore, $|\cV_m|=1$.
Since $\cU$ is complete and this vertex is connected to every vertex in $\cU$, the graph is complete. Consequently for any vertex $i$, Eq.~\eqref{eq:cubic polynomial} holds.
For $b<0$, the cubic polynomial can possess at most two positive real roots. Since every vertex value in the graph must be a root of this identical polynomial, the non-zero components of the optimal eigenvector $\bs$ can take at most two distinct values, i.e., $m=2$.

\textbf{Sub-case 2.2: $\cV_m$ is internally connected ($v_m^2 > \alpha$).}
Since $v_m$ is the global minimum value in the network, the product of any two non-zero vertices $s_i, s_j$ must satisfy $s_i s_j \ge v_m^2 > \alpha$. 
This implies that the graph is complete.
Consequently for any vertex $i$, Eq.~\eqref{eq:cubic polynomial} holds and it implies we have only two sets $\cV_1$ and $\cV_2$. Then we will show that the lower value set $\cV_2$ has exactly one vertex ($|\cV_2|=1$).

The sum $S_2 = \sum_{i \neq j} s_i s_j$ can be exactly expressed using the sum $S_1 = \sum_{i} s_i$:
\begin{equation}
    S_2 = S_1^2 - \sum_{i} s_i^2 = S_1^2 - 1.
\end{equation}
Consider a local perturbation involving three arbitrary vertices assigned values $x_1, x_2, x_3$ which do not all belong to the same class.
We restrict the perturbation such that their local sum $\tilde{S} = x_1 + x_2 + x_3 $ and their local sum of squares $C_2 = x_1^2 + x_2^2 + x_3^2 $ are held constant.  
Because $\tilde{S}$ is conserved, the global sum $S_1$ is strictly invariant. This ensures that the global sum $S_2$ remains an absolute constant under this 3-vertex perturbation. 
According to Eq.~\eqref{eq:lambda_S2_S4}, maximizing $\lambda$ at a constant $S_2$ requires maximizing $S_4$. For a complete graph, 
\begin{equation}
    S_4 = \sum_{i \neq j} s_i^2 s_j^2 = (\sum_i s_i^2)^2 - \sum_i s_i^4 = 1 - \sum_{i} s_i^4.
\end{equation}
Therefore, maximizing $S_4$ is equivalent to minimizing the local sum of fourth powers: $F = x_1^4 + x_2^4 + x_3^4$.

Let $\mu = \tilde{S} / 3 > 0$ be the local mean value. We express each value as a deviation from the mean: $x_i = \mu + \Delta_i$. By definition, $\sum_{i=1}^3 \Delta_i = 0$. The sum of squares is $\sum_{i=1}^3 \Delta_i^2 = C_2 - 3\mu^2 \equiv R$. 
Expanding the fourth power sum $F$:
\begin{equation}
    F = \sum_{i=1}^3 (\mu + \Delta_i)^4 = 3\mu^4 + 6\mu^2 R + \sum_{i=1}^3 \Delta_i^4 + 4\mu \sum_{i=1}^3 \Delta_i^3.
\end{equation}
For any three zero-sum variables, the identity $\sum \Delta_i^4 \equiv \frac{1}{2}(\sum \Delta_i^2)^2$ strictly holds. Thus, $\sum \Delta_i^4 = \frac{1}{2}(R)^2$, making it an absolute constant on this perturbation path. 
Furthermore, utilizing the identity $\sum \Delta_i^3 = 3\Delta_1 \Delta_2 \Delta_3$, $F$ depends only on the product of the deviations $\Delta_1 \Delta_2 \Delta_3$.
To minimize $F$ (and hence maximize $\lambda$), the product of the deviations $\Delta_1 \Delta_2 \Delta_3$ must be strictly negative. Because the three deviations sum to zero, a negative product dictates that exactly one deviation is negative and two deviations are positive. This means any optimal 3-vertex subset must configure itself into one smaller value and two larger values. Applying this universal rule to the entire graph, the lower set $\cV_2$ must have only one vertex.

Having completed the analysis for $\beta\neq0$, it remains to consider the degenerate case $\beta=0$. The stationarity condition Eq.~\eqref{eq:stationary_M} then reduces to $s_is_j=\alpha-\gamma_{ij}$, so that
$s_is_j=\alpha$ on every edge and $s_is_j\le\alpha$ otherwise.
Hence, if a vertex has value $s_i$, every adjacent vertex must have value $\alpha/s_i$, and the values therefore alternate between $s_i$ and $\alpha/s_i$ along any path in the graph. So at most two distinct positive values can occur.
If only one value occurs, the conclusion is immediate. Otherwise, denote the two distinct values by $s_1>s_2>0$. Since the two classes must be connected,
\begin{equation}
    s_1s_2=\alpha,
\end{equation}
and therefore
\begin{equation}
    s_1^2>\alpha,
    \qquad
    s_2^2<\alpha.
\end{equation}
Thus, there can be only one vertex with value $s_1$, while vertices with value $s_2$ cannot be mutually connected. The graph is consequently a star graph $K_{1,r}$, whose edge weights are all equal to $\lambda s_2/s_1$ by Eq.~\eqref{eq:eigen_equation} applied
at each leaf. For $r\ge2$, such a configuration is not optimal, as shown in Sub-case 2.1; for $r=1$, the two vertex values coincide. 

In all cases, we prove that the vertices take at most two distinct values, and whenever two distinct values occur, the lower value is assigned to exactly one vertex.
This completes the proof of Lemma~\ref{lem:structure_mixed}.

\section{Upper Bound for $\cN_1^{\max}(p)$}
\label{app:upper_bound}

In this appendix, we rigorously prove $f(p) = \frac{2}{1 + \sqrt{1 + 8p^{-1}}}$ is an upper bound of $\cN_1^{\max}(p)$ for any given purity $p \in (0, 1)$. 

Consider the continuous relaxation where the effective number of vertices is extended to a continuous real parameter $v > 2$. Let $m_1$ and $m_2$ be the continuous edge weights subject to $m_1 \ge m_2 > 0$. 
The largest eigenvalue of the corresponding quotient matrix $Q$ is given by:
\begin{equation}
    \Lambda (v)= \frac{1}{2}\left[ (v-2)m_1 + \sqrt{(v-2)^2 m_1^2 + 4(v-1)m_2^2} \right].
\end{equation}
The continuous constraints for normalization and purity $p$ are:
\begin{align}
    n_1 m_1 + n_2 m_2 &= 1,  \\
    n_1 m_1^2 + n_2 m_2^2 &= p, 
\end{align}
where $n_1 = \frac{(v-1)(v-2)}{2}$ and $n_2 = v-1$. 

Noting that $\cN_1^{\max}(p)\le \max_v \frac12 \Lambda(v)$, to prove the target upper bound $\cN_1^{\max}(p)\le f(p)$ it suffices to show that $\Lambda(v) \le 2f(p)$ for any valid $v$. Substituting the explicit form of $f(p)$, we have:
\begin{equation}
    \Lambda(v) \le \frac{4}{1 + \sqrt{1 + 8p^{-1}}}.
\end{equation}
Since $\Lambda(v) \in (0,1)$, we can rearrange this inequality by isolating $p$:
\begin{align}
    \sqrt{1 + 8p^{-1}} &\le \frac{4 - \Lambda(v)}{\Lambda(v)}, \notag \\
    1 + 8p^{-1} &\le \frac{16 - 8\Lambda(v) + \Lambda(v)^2}{\Lambda(v)^2}, \notag \\
    p^{-1} &\le \frac{2 - \Lambda(v)}{\Lambda(v)^2}. \notag
\end{align}
Taking the reciprocal of both sides, the inequality $\cN_1^{\max}(p) \le f(p)$ is exactly relaxed into verifying:
\begin{equation}
    p\ge \frac{\Lambda(v)^2}{2-\Lambda(v)}, \quad \forall \text{ valid } v.
\end{equation}

For simplicity we introduce variables $x = n_1 m_1$, $\sigma = v - 1 > 1$. The physical feasibility requires $0<x<1$. Under this transformation, the eigenvalue and purity are reformulated as:
\begin{align}
    \Lambda &= \frac{x + \sqrt{x^2 + \sigma (1-x)^2}}{\sigma}, \label{eq:lambda_macro} \\
    p &= \frac{2x^2}{\sigma(\sigma-1)} + \frac{(1-x)^2}{\sigma}. \label{eq:purity_macro}
\end{align}
By isolating $\sigma$ from Eq.~\eqref{eq:lambda_macro}, we obtain:
\begin{equation}
    \sigma = \frac{(1-x)^2 + 2\Lambda x}{\Lambda^2}.\label{eq:sigma_lambda}
\end{equation}

We now construct the function $\Delta p = p - \frac{\Lambda^2}{2-\Lambda}$. $\Delta p$ expands into:
\begin{equation}
    \Delta p = \frac{2x^2 + (1-x)^2(\sigma-1)}{\sigma(\sigma-1)} - \frac{\Lambda^2}{2-\Lambda}.
\end{equation}
By substituting Eq.~\eqref{eq:sigma_lambda}, we obtain the exact fractional form:
\begin{equation}
    \Delta p = \frac{\cC(x,\Lambda)}{\cD(x, \Lambda)}, \label{eq:delta_p}
\end{equation}
where the denominator is defined as:
\begin{equation}
\begin{aligned}
    \cD(x, \Lambda) =& (2-\Lambda) [x^2 + 2(\Lambda-1)x + 1 ] \\
    &\times[ x^2 + 2(\Lambda-1)x + 1 - \Lambda^2 ].
\end{aligned}
\end{equation}
Since $\sigma > 1$ and $\Lambda \in (0,1)$, it is guaranteed that the denominator $\mathcal{D}(x, \Lambda) = (2-\Lambda)\Lambda^4\sigma(\sigma-1)$ is strictly positive. The sign of $\Delta p$ is therefore entirely dictated by the numerator $\cC(x)$. Remarkably, $\cC(x)$ permits an analytical factorization:
\begin{equation}
    \cC(x) =\Lambda^2 (x + \Lambda - 1)^2  (1-x)  [ (1-x) + \Lambda(1+x) ].
\end{equation}
We can easily verify the non-negativity of each factor within the feasible domain. Since all factors are non-negative, $\cC(x) \ge 0$ holds. Therefore, the inequality $p \ge \frac{\Lambda^2}{2-\Lambda}$ holds. This completes the proof. 

\section{Scaling and asymptotic tightness of the upper bound}
\label{app:scaling}

Recall that
\begin{equation}\label{eqa:gk-definition}
    g_k:=\sup_{\substack{P^2=P=P^\dagger\\\operatorname{rank}(P)=k}}
    \Tr\!\left[(P^\Gamma)_-^2\right],
\end{equation}
where the supremum is taken over all finite bipartite dimensions.
In this appendix we prove that
\begin{equation}\label{eqa:gk-properties}
    \frac{1}{2}\le\frac{g_k}{k}\le1,
    \qquad
    \lim_{k\to\infty}\frac{g_k}{k}=1,
\end{equation}
and use them to show that the bound in Eq.~\eqref{eq:upper_bound}
scales as $\sqrt{kp}$. For fixed $k$, this scaling is optimal up to at most a
factor of $\sqrt{2}$ in the low-purity limit $p\to0$. We further show
that the bound becomes asymptotically tight in the low-purity regime
$p\le 1/k$ as $k\to\infty$.

\subsection{General bounds on $g_k$}

For any rank-$k$ projector $P$, the invariance of the Hilbert-Schmidt
norm under partial transpose gives
\begin{equation}\label{eqa:gk-upper}
    \Tr[(P^\Gamma)_-^2]\le\Tr[(P^\Gamma)^2]=\Tr(P^2)=k,
\end{equation}
and taking the supremum yields $g_k\le k$.

For the lower bound, let $\ket{\Phi_d}=d^{-1/2}\sum_{i=1}^d\ket{ii}$ be
the maximally entangled state and $\Phi_d=\ketbra{\Phi_d}$. Since $\Phi_d^\Gamma$ has
eigenvalue $-1/d$ on the antisymmetric subspace of dimension $d(d-1)/2$,
\begin{equation}\label{eqa:rank-one-lower}
    \Tr \left[(\Phi_d^\Gamma)_-^2\right]
    =\frac{d(d-1)}{2}\frac{1}{d^2}=\frac{d-1}{2d}.
\end{equation}
Taking the direct sum of $k$ copies of $\Phi_d$ supported on pairwise locally orthogonal tensor-product subspaces yields a rank-$k$ projector $Q_{k,d}$ whose
partial transpose remains block diagonal, so
\begin{equation}\label{eqa:direct-sum-lower}
    \Tr \left[(Q_{k,d}^\Gamma)_-^2\right]=k\,\frac{d-1}{2d}.
\end{equation}
Taking $d\to\infty$ and combining with Eq.~\eqref{eqa:gk-upper} gives
\begin{equation}\label{eqa:gk-two-sided}
    \frac{k}{2}\le g_k\le k.
\end{equation}

\subsection{Asymptotic behavior of $g_k$}

We now show that the upper bound $g_k\le k$ is asymptotically sharp by constructing an explicit family of projectors for which $g_k/k\to1$. Fix an
integer $q\ge1$ and set $k=2^q$ and $n=2q+2$. On
$\mathds{C}^k=(\mathds{C}^2)^{\otimes q}$, define
\begin{equation}\label{eqa:Clifford-generators}
    \begin{aligned}
        G_{2j-1}&=Z^{\otimes(j-1)}\otimes X\otimes\I^{\otimes(q-j)},\\
        G_{2j}&=Z^{\otimes(j-1)}\otimes Y\otimes\I^{\otimes(q-j)},\\
        G_{2q+1}&=Z^{\otimes q},
    \end{aligned}
\end{equation}
for $j=1,\dots,q$, where $X,Y,Z$ are the Pauli matrices. These are
Hermitian unitaries obeying the Clifford relations
\begin{equation}\label{eqa:Clifford-relations}
    G_aG_b+G_bG_a=2\delta_{ab}\I_k,
\end{equation}
and are also commonly referred to as Majorana
operators \cite{BravyiFermionicQuantumComputation2002}. Set
\begin{equation}
    U_0=\ii\I_k,\quad
    U_a=G_a \quad\text{for $a=1,\dots,2q+1$}.
\end{equation}
Each $U_a$ is unitary, $U_0^\dagger=-U_0$, 
$U_a^\dagger=U_a$ for $a\ge1$, and the $U_a$ with $a\ge1$ mutually
anticommute; hence
\begin{equation}\label{eqa:U-relation}
    U_bU_a^\dagger=-U_aU_b^\dagger\qquad(a\neq b).
\end{equation}

Define
\begin{equation}\label{eqa:Clifford-isometry}
    W=\frac{1}{\sqrt n}\sum_{a=0}^{n-1}U_a\otimes\ket{a}.
\end{equation}
Since $W^\dagger W=n^{-1}\sum_aU_a^\dagger U_a=\I_k$, $W$ is an isometry
and therefore $P_k:=WW^\dagger$ is a rank-$k$ projector,
\begin{equation}\label{eqa:Clifford-projector}
    P_k=\frac{1}{n}\sum_{a,b=0}^{n-1}U_aU_b^\dagger\otimes\ketbra{a}{b}.
\end{equation}
Taking the partial transpose on the second subsystem and using
Eq.~\eqref{eqa:U-relation} gives
\begin{equation}\label{eqa:Clifford-partial-transpose}
    \begin{aligned}
        P_k^\Gamma
        &=\frac{1}{n}\sum_{a,b}U_bU_a^\dagger\otimes\ketbra{a}{b}\\
        &=\frac{1}{n}\sum_a\I_k\otimes\ketbra{a}
        -\frac{1}{n}\sum_{a\neq b}U_aU_b^\dagger\otimes\ketbra{a}{b}\\
        &=\frac{2}{n}\,\I_{kn}-P_k.
    \end{aligned}
\end{equation}
This simple affine relation determines the spectrum of $P_k^\Gamma$ immediately. Since $P_k^\Gamma$ has the negative eigenvalue $2/n-1$ on the support of $P_k$ with multiplicity $k$, and the positive
eigenvalue $2/n$ on its orthogonal complement, the negative part
$(P_k^\Gamma)_-=(1-2/n)P_k$. Hence
\begin{equation}\label{eqa:Clifford-bound}
    \frac{g_k}{k}\ge\frac{1}{k}\Tr\!\left[(P_k^\Gamma)_-^2\right]
    =\left(1-\frac{2}{n}\right)^2=\left(\frac{q}{q+1}\right)^2.
\end{equation}
Together with $g_k/k\leq1$, this implies
\begin{equation}
    \lim_{q\to\infty}\frac{g_{2^q}}{2^q}=1.
    \label{eqa:Clifford-subsequence}
\end{equation}
To extend the result from the sequence $k=2^q$ to arbitrary $k$, we use the superadditivity of $g_k$. Because $g_r$ and $g_s$ are defined as
suprema, let $\varepsilon>0$ and choose a rank-$r$ projector $P_r$ and a rank-$s$
projector $P_s$, respectively, such that
\begin{equation}
    \Tr\left[(P_r^\Gamma)_-^2\right]>g_r-\varepsilon,\quad
    \Tr\left[(P_s^\Gamma)_-^2\right]>g_s-\varepsilon.
\end{equation}
Placing $P_r$ and $P_s$ on locally orthogonal tensor-product
subspaces gives
\begin{equation}
    (P_r\oplus P_s)^\Gamma=P_r^\Gamma\oplus P_s^\Gamma.
\end{equation}
Consequently,
\begin{equation}
    \begin{aligned}
        g_{r+s}
        &\ge\Tr\left[(P_r^\Gamma)_-^2\right]
        +\Tr\left[(P_s^\Gamma)_-^2\right]\\
        &>g_r+g_s-2\varepsilon.
    \end{aligned}
\end{equation}
Since $\varepsilon>0$ is arbitrary,
\begin{equation}
    g_{r+s}\ge g_r+g_s.
\end{equation}
Fekete's lemma~\cite{FeketeUberdieVerteilung1923} then gives
\begin{equation}
    \lim_{k\to\infty}\frac{g_k}{k}
    =\sup_{k\ge1}\frac{g_k}{k}.
\end{equation}
Equation~\eqref{eqa:Clifford-subsequence} shows that the supremum
is at least $1$, whereas Eq.~\eqref{eqa:gk-upper} shows that it is
at most $1$. Consequently,
\begin{equation}\label{eqa:gk-limit}
    \lim_{k\to\infty}\frac{g_k}{k}=1.
\end{equation}

\subsection{Optimality of the low-purity scaling}

A simple embedding already shows that the $\sqrt{kp}$ scaling is optimal up to a constant factor. Fix $k$ and let $0<p\le1/k$. For any state $\rho$ with $\Tr(\rho^2)=kp$, enlarge one side of the bipartition by a maximally mixed $k$-dimensional ancilla and define
\begin{equation}\label{eqa:mixed-embedding}
    \tilde\rho=\frac{\I_k}{k}\otimes\rho,
    \qquad
    \Tr(\tilde\rho^2)=\frac{1}{k}\Tr(\rho^2).
\end{equation}
Since $\tilde\rho^\Gamma=(\I_k/k)\otimes\rho^\Gamma$, the $k$ smallest
eigenvalues of $\tilde\rho^\Gamma$ are the $k$ copies of
$\lambda_{\min}(\rho^\Gamma)/k$, and therefore
\begin{equation}\label{eqa:embedding-identity}
    \cN_k(\tilde\rho)=\cN_1(\rho).
\end{equation}
Maximizing over $\rho$ with $\Tr(\rho^2)= kp$ gives 
\begin{equation}\label{eqa:max-reduction}
    \cN_k^{\max}(p)\ge\cN_1^{\max}(kp).
\end{equation}
By the exact solution of the case $k=1$ in Sec.~\ref{sec:mixed_k1},
$\cN_1^{\max}(kp)=[1-o(1)]\sqrt{kp/2}$ as $p\to 0$, and combining with
Theorem~\ref{thm:upper_bound} gives
\begin{equation}\label{eqa:low-purity-two-sided}
    \begin{aligned}
        \left[1-o(1)\right]\sqrt{\frac{kp}{2}}
        &\le\cN_k^{\max}(p)\\
        &\le\sqrt{g_kp}\le\sqrt{kp},
        \qquad p\to0.
    \end{aligned}
\end{equation}
Thus, the $\sqrt{kp}$ scaling of the bound in Eq.~\eqref{eq:upper_bound} is optimal up to at most a factor of $\sqrt{2}$ in the low-purity limit.

\subsection{Asymptotic tightness in the low-purity regime}

We finally show that the upper bound becomes asymptotically tight,
\begin{equation}
    \lim_{\substack{k\to\infty\\ p\to0}}
    \frac{\cN_k^{\max}(p)}{\sqrt{kp}}=1.
\end{equation}
More precisely, we will prove that
$\cN_k^{\max}(p)\sim\sqrt{kp}$ as $k\to\infty$ in the regime $p\le 1/k$. 
For $k=2^q$, let $P_k$ be the projector of
Eq.~\eqref{eqa:Clifford-projector} and set
\begin{equation}\label{eqa:normalized-Clifford-state}
    \sigma_k:=\frac{P_k}{k},
\end{equation}
which has purity $1/k$. By
Eq.~\eqref{eqa:Clifford-partial-transpose}, $\sigma_k^\Gamma$ has
exactly $k$ negative eigenvalues, each equal to $-(1-2/n)/k$, so
\begin{equation}\label{eqa:Clifford-state-k-negativity}
    \cN_k(\sigma_k)=1-\frac{2}{n}=\frac{q}{q+1},
    \qquad k=2^q.
\end{equation}
Since Theorem~\ref{thm:upper_bound} gives
$\cN_k(\sigma_k)\le\sqrt{g_k/k}$ at purity $p=1/k$,
\begin{equation}\label{eqa:Clifford-squeeze}
    \frac{q}{q+1}\le\sqrt{\frac{g_{2^q}}{2^q}}\le1,
\end{equation}
and both sides approach $1$ as $q\to\infty$.

The construction can be adapted to every positive integer $k$ by
taking locally orthogonal direct sums. Write the binary expansion
$k=\sum_{j=0}^m\epsilon_j2^j$ with $\epsilon_j\in\{0,1\}$ and $m=\lfloor\log_2k\rfloor$. For every $j\ge 1$ with $\epsilon_j=1$, take a copy of the
rank-$2^j$ projector $P_{2^j}$. If $\epsilon_0=1$, take a two-qubit
maximally entangled rank-one projector. Place these projectors on
pairwise locally orthogonal tensor-product subspaces, and denote
their direct sum by $\tilde P_k$. Then $\tilde P_k$ is a
rank-$k$ projector. Define
\begin{equation}\label{eqa:binary-state}
    \tilde\sigma_k:=\frac{\tilde P_k}{k}.
\end{equation}
As before, $\Tr(\tilde\sigma_k^2)=1/k$. Because partial transpose preserves the block structure, the negative eigenvalues are obtained by collecting those of the blocks. Hence,
\begin{equation}\label{eqa:binary-state-k-negativity}
    \cN_k(\tilde\sigma_k)=\frac{1}{k}\sum_{j=0}^m\epsilon_j2^ja_j,
\end{equation}
where $a_0=1/2$ and $a_j=j/(j+1)$ for $j\ge1$. With
$L=\lfloor m/2\rfloor$ and $1-a_j\le1/(j+1)$,
\begin{equation}\label{eqa:binary-state-bound}
    \begin{aligned}
        1-\cN_k(\tilde\sigma_k)
        &=\frac{1}{k}\sum_{j=0}^m\epsilon_j2^j(1-a_j)\\
        &\le\frac{1}{k}\sum_{j=0}^L2^j+\frac{1}{L+2}\frac{1}{k}\sum_{j=L+1}^m\epsilon_j2^j\\
        &\le2^{L+1-m}+\frac{1}{L+2}
    \end{aligned}
\end{equation}
The right-hand side vanishes as $k\to\infty$, implying
        \begin{equation}
                \cN_k(\tilde\sigma_k)\to 1.
                \label{eqa:binary-state-limit}
            \end{equation} 
Since $\tilde\sigma_k$ has purity $1/k$, Theorem~\ref{thm:upper_bound} implies
\begin{equation}
    \cN_k(\tilde\sigma_k)\leq
    \cN_k^{\max}(1/k)\leq\sqrt{\frac{g_k}{k}}\le 1.
    \label{eqa:asymptotic-tightness-squeeze}
\end{equation}
Combining this with Eq.~\eqref{eqa:gk-limit}, the squeeze theorem yields
\begin{equation}
    \frac{\cN_k^{\max}(1/k)}
    {\sqrt{g_k/k}}
    \to 1.
    \label{eqa:asymptotic-tightness}
\end{equation}

This construction at $p=1/k$ can be extended to the entire regime $0<p\le 1/k$. Let $\mu=kp\le1$. For any $0<\varepsilon<\mu$, define a normalized state $\sigma_{p,\varepsilon}$ with purity $p$ as
\begin{equation}
    \sigma_{p,\varepsilon}
    =
    (\sqrt{\mu-\varepsilon}\tilde{\sigma}_k)
    \oplus \rho_\varepsilon,
\end{equation}
where $\rho_\varepsilon$ is an unnormalized separable state supported
on an orthogonal subspace and chosen such that
\begin{equation}
    \Tr(\rho_\varepsilon)
    =1-\sqrt{\mu-\varepsilon},
    \qquad
    \Tr(\rho_\varepsilon^2)
    =\frac{\varepsilon}{k}.
\end{equation}
Such a state can always be constructed in sufficiently large dimension.
Since a separable state does not contribute to the negative spectrum, we have
\begin{equation}
    \cN_k(\sigma_{p,\varepsilon})
    =
    \sqrt{\mu-\varepsilon}\,
    \cN_k(\tilde{\sigma}_k).
\end{equation}
Taking $\varepsilon\to0$ yields
\begin{equation}\label{eq:app-improved-lower-bound}
    \cN_k^{\max}(p)\ge \sqrt{kp}\,\cN_k(\tilde{\sigma}_k),
    \qquad 0<p\le\frac1k.
\end{equation}
Therefore,
\begin{equation}
    \cN_k(\tilde{\sigma}_k)\le\frac{\cN_k^{\max}(p)}{\sqrt{kp}}
    \le \sqrt{\frac{g_k}{k}},
    \qquad 0<p\le\frac1k.
\end{equation}
Since
$\cN_k(\tilde{\sigma}_k)\to1$ and $g_k/k\to1$, the two bounds converge to the same limit. Therefore,
\begin{equation}
    \frac{\cN_k^{\max}(p)}{\sqrt{kp}}
    \to 1,
    \qquad k\to\infty,
\end{equation}
for $0<p\le1/k$. Then the upper bound
in Eq.~\eqref{eq:upper_bound} is asymptotically tight as $k\to\infty$ in the low-purity regime.

\section{Relaxed purity constraint}\label{app:purity_relaxation}
		
In this appendix, we prove that relaxing the purity constraint from $\Tr(\rho^2) = p$ to $\Tr(\rho^2) \le p$ does not alter the maximum attainable value of the Ky Fan $k$-negativity. It suffices to show that $\cN_k^{\max}(p)$ is a non-decreasing function of $p$.

Suppose $\rho$ is an arbitrary quantum state with purity $\Tr(\rho^2) = q < p$.
We express it in its spectral decomposition as
\begin{equation}
    \rho = \sum_{i=1}^n q_i \ketbra{\varphi_i}{\varphi_i}
\end{equation}
where $\sum_{i=1}^n q_i^2 = q$. Let $\bm{q} = (q_1,\dots,q_n)^T$ denote the corresponding probability vector.

Because $q < p \le 1$, we can always construct a more concentrated probability vector $\bm p = (p_1, \dots, p_n)^T$ such that it majorizes the vector $\bm q$ (denoted as $\bm p \succ \bm q$) and satisfies $\sum_{i=1}^n p_i^2 = p$.

By the Hardy-Littlewood-P\'olya theorem \cite{MarshallInequalitiesTheoryMajorization2011}, the relation $\bm p \succ \bm q$ implies that there exists a doubly stochastic matrix $D$ such that $\bm q = D\bm p$. Furthermore, according to the Birkhoff-von Neumann theorem \cite{HornMatrixanalysis2012}, any doubly stochastic matrix $D$ can be decomposed into a convex combination of permutation matrices $G_\ell$. Thus, we can write
\begin{equation}
    D = \sum_\ell w_\ell G_\ell,
\end{equation} 
where $\{w_\ell\}$ forms a valid probability distribution ($w_\ell \ge 0, \sum_\ell w_\ell = 1$).

Let us define vectors $\bm p^{\ell} = G_\ell \bm p$, whose components are given by $p_i^\ell = (G_\ell \bm p)_i$. We then construct a family of density matrices $\rho_\ell$ sharing the identical eigenbasis as $\rho$, but with the permuted eigenvalues:
\begin{equation}
    \rho_\ell = \sum_{i=1}^n p_i^\ell \ketbra{\varphi_i}{\varphi_i}.
\end{equation} 
By construction, each $\rho_\ell$ has the exact target purity $\Tr(\rho_\ell^2) = \sum_i (p_i^\ell)^2 = p$.

The original state $\rho$ can now be reconstructed as a convex combination of these states:
\begin{equation}
    \rho = \sum_\ell w_\ell \rho_\ell.
\end{equation}

Finally, leveraging the convexity of the Ky Fan $k$-negativity, we obtain:
\begin{equation}
    \cN_k(\rho) \le \sum_\ell w_\ell \cN_k(\rho_\ell) \le \max_\ell \cN_k(\rho_\ell).
\end{equation} 

Since every $\rho_\ell$ is a valid quantum state with purity $p$, its $k$-negativity cannot exceed the maximum $\cN_k^{\max}(p)$ attainable at that specific purity. Therefore, $\cN_k(\rho) \le \cN_k^{\max}(p)$. Consequently, $\cN_k^{\max}(p)$ is monotonically non-decreasing, and optimizing over the relaxed inequality $\Tr(\rho^2) \le p$ is equivalent to optimizing over the equality $\Tr(\rho^2) = p$.

\section{Quantitative $p_3$-PPT criterion}
\label{app:negativity_bound}

In this appendix, we derive the explicit expression for the function
$f(p_2,p_3,\cN_1(\rho))$ that appears in Eqs.~\eqref{eq:p3ppt_quant_pre}
and \eqref{eq:p3ppt_quant}.
Denote the absolute values of the negative eigenvalues of $\rho^{\Gamma}$ in descending order by $\alpha_1,\alpha_2,\dots,\alpha_m$ and its non-negative eigenvalues by $\beta_1,\dots,\beta_n$.  Then $\cN(\rho)=\sum_{i=1}^m\alpha_i$, $\sum_{j=1}^n\beta_j=1+\cN(\rho)$, and the PT moments read
\begin{equation}
    p_2=\sum_{j=1}^n\beta_j^2+\sum_{i=1}^m\alpha_i^2,\qquad
    p_3=-\sum_{i=1}^m\alpha_i^3+\sum_{j=1}^n\beta_j^3.
\end{equation}

We consider the following vectors in $\mathds{R}^{n+m}$:
\begin{equation}
    \begin{aligned}
        u&=(\sqrt{\beta_1},\dots,\sqrt{\beta_n},\sqrt{\alpha_1},
        \dots,\sqrt{\alpha_m}),\\
        v&=(\beta_1^{3/2},\dots,\beta_n^{3/2},\alpha_1\sqrt{\alpha_1},
        \dots,\alpha_1\sqrt{\alpha_m}).
    \end{aligned}
\end{equation}
The Cauchy-Schwarz inequality gives
\begin{equation}
    \label{eq:gram_basic}
    \|u\|^2\|v\|^2\ge\bigl|\langle u,v\rangle\bigr|^2.
\end{equation}
By taking $\|u\|^2=1+2\cN(\rho)$, $\langle u,v\rangle \ge p_2$, and $\|v\|^2 \le p_3+2\alpha_1^2\cN(\rho)$, Eq.~\eqref{eq:gram_basic} yields the key inequality
\begin{equation}
    \label{eq:key_gram}
    (1+2\cN(\rho))(p_3+2\alpha_1^2\cN(\rho))\ge p_2^2.
\end{equation}
Expanding Eq.~\eqref{eq:key_gram} gives a quadratic inequality for $\cN(\rho)$:
\begin{equation}
    \label{eq:quad_gram}
    4\alpha_1^2 \cN(\rho)^2+2(\alpha_1^2+p_3)\cN(\rho)+(p_3-p_2^2)\ge 0.
\end{equation} 
Solving for $\cN(\rho)\ge 0$ yields
\begin{equation}
    \label{eq:gram_sol}
    \cN(\rho)\ge\frac{p_2^2-p_3}{\sqrt{(\alpha_1^2-p_3)^2+4\alpha_1^2p_2^2}+\alpha_1^2+p_3}.
\end{equation}
Noting that $\alpha_1=\cN_1(\rho)$, the function $f$ in Eq.~\eqref{eq:p3ppt_quant} is given by
\begin{equation}
    \begin{aligned}
        f(p_2,p_3,\cN_1(\rho))=&\sqrt{\bigl[\cN_1(\rho)^2-p_3\bigr]^2+4\cN_1(\rho)^2p_2^2}\\
        &+\cN_1(\rho)^2+p_3.
    \end{aligned}
\end{equation}	
Moreover, Eq.~\eqref{eq:key_gram} shows that
$\alpha_1=\cN_1(\rho)$ in Eq.~\eqref{eq:gram_sol} can be replaced by
any upper bound on $\cN_1(\rho)$. In particular, using
$\Tr(\rho^2)=\Tr[(\rho^\Gamma)^2]$, we may replace $\cN_1(\rho)$ with
$\cN_1^{\max}(p_2)$, obtaining the quantitative $p_3$-PPT criterion:
\begin{equation}
\cN(\rho)\ge
\frac{p_2^2-p_3}
{f\bigl(p_2,p_3,\cN_1^{\max}(p_2)\bigr)}.
\end{equation}

\bibliography{kNegativityRef}

\end{document}